 \newtheorem{mydef}{Definition}
 \newtheorem{myth}{Theorem}
\newcommand{\indic}[1]{\ensuremath{ \mathbbm{1}_{#1} } }
\newsavebox\myboxA
\newsavebox\myboxB
\newlength\mylenA
\newcommand*\xoverline[2][0.75]{%
 \sbox{\myboxA}{$\m@th#2$}%
 \setbox\myboxB\null
 \ht\myboxB=\ht\myboxA%
 \dp\myboxB=\dp\myboxA%
 \wd\myboxB=#1\wd\myboxA
 \sbox\myboxB{$\m@th\overline{\copy\myboxB}$}
 \setlength\mylenA{\the\wd\myboxA}
 \addtolength\mylenA{-\the\wd\myboxB}%
 \ifdim\wd\myboxB<\wd\myboxA%
 \rlap{\hskip 0.5\mylenA\usebox\myboxB}{\usebox\myboxA}%
 \else
 \hskip -0.5\mylenA\rlap{\usebox\myboxA}{\hskip 0.5\mylenA\usebox\myboxB}%
 \fi}
\numberwithin{equation}{section}
\begin{document}

\title{On the optimal importance process for piecewise deterministic Markov process} 

\author{H. Chraibi}\address{PERICLES department, EDF lab saclay, 7 Bd Gaspard Monge, 91120 Palaiseau, France}
\author{A. Dutfoy}\address{PERICLES department, EDF lab saclay, 7 Bd Gaspard Monge, 91120 Palaiseau, France}
\author{\underline{T. Galtier}}\address{LPSM (Laboratoire de probabilit\'es, statistique et mod\'elisation), Universit\'e Paris Diderot, 75205 Paris Cedex 13, France,tgaltier@gmail.com}
\author{J. Garnier}\address{CMAP (Centre de math\'ematiques appliqu\'ees), \'Ecole polytechnique, 91128 Palaiseau Cedex, France}

\begin{abstract}
In order to assess the reliability of a complex industrial system by simulation, and in reasonable time, variance reduction methods such as importance sampling can be used. We propose an adaptation of this method for a class of multi-component dynamical systems which are modeled by piecewise deterministic Markovian processes (PDMP). We show how to adapt the importance sampling method to PDMP, by introducing a reference measure on the trajectory space. This reference measure makes it possible to identify the admissible importance processes. Then we derive the characteristics of an optimal importance process, and present a convenient and explicit way to build an importance process based on theses characteristics. A simulation study compares our importance sampling method to the crude Monte-Carlo method on a three-component systems. The variance reduction obtained in the simulation study is quite spectacular.\\
\end{abstract}

\keywords{Monte-Carlo acceleration ; importance sampling ; hybrid dynamic system ; piecewise deterministic Markovian process ; cross-entropy ; reliability} 

\subjclass{60K10;90B25;62N05} 
\maketitle
\section{Introduction}
\label{sec1}
For safety and regulatory issues, nuclear or hydraulic industries must assess the reliability of their power generation systems. To do so, they can resort to probabilistic safety assessment. In recent years, dynamic reliability methods have been gaining interest, because they avoid conservative static approximations of the systems and they better capture the dynamics involved in the systems. When dealing with complex industrial systems, this kind of reliability analysis faces two main challenges: the first challenge is related to the modeling of such complex systems, the second one concerns the quantification of the reliability. Indeed as we refine the model the estimation of the reliability requires more efforts and is often challenging. 

\subsection{A model based on a PDMP}
\label{subsec:PDMP} 
Due to the complexity of the systems the reliability analysis is often done through an event tree analysis \cite{cepin2011assessment} which requires static approximations of the system, and  relies on conservative approximations. With the development of computational capacities, it is now possible to consider more accurate tools for reliability  assessment. Several attempts have been proposed to better model the dynamical processes involved in the systems. In this article, we focus on the option proposed in  \cite{zhang2008piecewise} and \cite{dufour2015numerical}, which consists in modeling the system using a piecewise deterministic Markovian process (PDMP) with boundaries.

In many industrial systems, and in particular in power generation systems, failure corresponds to a physical variable of the system (such as temperature, pressure, water level) entering a critical region. The physical variables can enter this region only if a sufficient number of the basic components of the system are damaged. In order to estimate the reliability we need an accurate model of the trajectories of the physical variables. In industrial systems, the physics of the system is often determined by ordinary or partial differential equations which depend on the statuses of the components within the system (on, off or failed). Therefore the dynamics of the physical variables changes whenever the statuses of the components are altered. Such alteration can be caused by automatic control mechanisms within the systems or failures or repairs. It is also possible that the values of physical variables impact the statuses of the components, because the failure and repair rates of the components depend on the physical conditions. In order to deal with this interplay between the physical variables and the statuses of components, we need to model their joint evolution. The vector gathering these variables is called the state of the system. To address the challenge of modeling the trajectory of the state of the system,  we model the evolution of the state of the system by a piecewise deterministic Markovian process (PDMP) with boundaries. PDMPs were introduced by M.H.A Davis in \cite{davis1984,davis1993markov}, they benefit from high modeling capacity, as they are meant to represent the largest class of Markovian processes that do not include diffusion. These processes can easily incorporate component aging, failure on demand, and delays before repairs. 
 
For a given system, we denote its state at time $t$ by $Z_t =(X_t,M_t)$, where $X_t$ is the vector of the values of the physical variables, and $M_t$ the vector gathering the statuses of all the components in the system. Throughout the paper we call $X_t$ the position of the system, and $M_t$ the mode of the system. $\mathbf Z=(Z_t)_{t\in[0,t_f)}$ represents a trajectory of the state of the system up to a final observation time~$t_f$. We consider that the trajectories are all initiated in a state $z_o$.

Recall the system fails when the physical variables enter a critical region. We denote by $D$ the corresponding region of the state space, and we denote by $\mathscr D$ the set of the trajectories of $\mathbf Z$ that pass through $D$. In order to estimate the reliability on the observation time $t_f$, we want to estimate the probability of system failure defined by $$p=\mathbb{P}\big(\mathbf{Z}\in \mathscr D |Z_0\mbox{\hspace{-0.3 ex}}=\mbox{\hspace{-0.35ex}}z_o\big)=\mathbb{P}_{z_o}\big(\mathbf{Z}\in \mathscr D \big).$$


\subsection{Accelerate reliability assessment by using importance sampling}
\label{subsec:Pyc}
The second challenge is that the reliability of a complex industrial system can rarely be assessed analytically, so reliability analysis often relies on simulations techniques. The company \'Electricit\'e de France (EDF) has recently developed the PyCATSHOO toolbox \cite{Pycatshoo,Pycatshoo2}, which allows the simulation and the modeling of dynamic hybrid systems. PyCATSHOO bases its modeling on PDMPs. Thanks to Monte-Carlo simulation, it evaluates dependability criteria, among which is the reliability of the system. The method we present in this article is used to accelerate the reliability assessment within the PyCATSHOO toolbox.

In the context of reliable systems, crude Monte-Carlo techniques perform poorly because the system failure is a rare event. Indeed, with the Monte-Carlo method, when the probability of failure approaches zero, the number of simulations to get a reasonable precision on the relative error increases dramatically, and so does the computational time. To reduce this computational burden, one option is to reduce the number of simulations needed by using a variance reduction method. Among variance reduction techniques \cite{caron2014some,morio2014survey}, we may think of multilevel splitting techniques \cite{cerou2006genetic, del2005genealogical} and of importance sampling techniques \cite{dufour2015numerical,tutorialCE,bucklew2013introduction,zio2013monte}. A variance reduction method, inspired from particle filtering can be used on a particular case of PDMP that is a PDMP whithout boundary \cite{whiteley2011monte}. Unfortunately the industrial systems are often modeled by a PDMP with boundaries, and other variance reduction methods need  to be designed for these cases. We choose to focus on the importance sampling technique, because: 1) the importance sampling strategy that we propose can easily be implemented (in particular in the PyCATSHOO toolbox) 2) the results derived in this paper (in particular the reference measure and the expressions of the densities and likelihood ratios) should be useful to study multilevel splitting.

In this paper we present how to adapt the importance sampling technique for PDMP. By doing so we generalize the use of importance sampling, not only for many power generation systems, but also for any phenomenon that can be modeled by a PDMP. As PDMP generalizes numerous kinds of processes (among which are discrete Markov chains, continuous time Markov chains, compound Poisson processes or queuing systems), the scope of our work goes way beyond the study of power generation systems.

\subsubsection{Prerequisite for importance sampling on PDMPs}
\label{subsec:IS}
Remember that we want to apply importance sampling to estimate the probability $p=\mathbb{P}_{z_o}\big(\mathbf{Z}\in \mathscr D \big)$ that the system fails. In our case, importance sampling would consist in simulating from a more fragile system, while weighting the simulation outputs by the appropriate likelihood ratio. The issue is that the random variable we are considering is a trajectory of a PDMP, so we need to clarify what is the density (or the likelihood) for a trajectory of PDMP. Namely we need to introduce a reference measure for PDMP trajectories, and to identify its related densities.

In simple cases of dynamical importance sampling, this issue of the reference measure is often eluded, because the reference measure has an obvious form: it is often a product of Lebesgue measures, or a product of discrete measures. But PDMPs are very degenerate processes, their laws involve hybrid random variables which have continuous and discrete parts. In this context, it is important to ensure that we do have a reference measure that is sigma-finite to define properly the densities and the likelihood ratios.

Suppose $\zeta$ is a reference measure for $ \mathbb{P}_{z_o}\big(\mathbf{Z} \in . \big)$, we denote by
 $f$ the density of $\mathbf Z$ with respect to $\zeta$, and we denote by
 $g$ the density of an importance process $\mathbf Z^\prime$ with respect to $\zeta$. If $\zeta$ exists, and $f$ and $g$ satisfy $\forall \mathbf z\in\mathscr D,\ f(\mathbf{z})\ne 0\Rightarrow g(\mathbf{z})\ne 0,\, $ then we can write:
 \begin{align}
 \mathbb{P}_{z_o}\big(\mathbf{Z}\in \mathscr D \big) = \mathbb{E}_f\big[\indic{\mathscr D}(\mathbf Z)\big] 
 &= \int_\mathscr{D} f(\mathbf{z})\,d\zeta(\mathbf{z}) = \int_\mathscr{D}\dfrac{f(\mathbf{z})}{g(\mathbf{z})}g(\mathbf{z})\,d\zeta(\mathbf{z}) = \mathbb{E}_g\bigg[\indic{\mathscr D}(\mathbf Z)\dfrac{f(\mathbf{Z})}{g(\mathbf{Z})}\bigg]\label{eq:ISPDMP}.
 \end{align}
 If $\big(\mathbf Z_1^\prime,\dots \mathbf Z_{N_{sim}}^\prime\big)$ is a sample of independent trajectories simulated according to an importance process with density $g$, then $\mathbb{P}_{z_o}\big(\mathbf{Z}\in \mathscr D \big) $ can be estimated without bias by:
 \begin{align}
 \hat{p}_{IS}&= \frac 1 {N_{sim}} \sum_{i=1}^{N_{sim} }\indic{ \mathscr D }(\mathbf Z_i^\prime) \frac{f(\mathbf Z_i^\prime)}{g(\mathbf Z_i^\prime)} & \mbox{with}\quad \mathbb V\mbox{ar}(\hat{p}_{IS}) =\frac {\mathbb E_f\left[\indic{\mathscr D}(\mathbf Z)\frac{f(\mathbf Z)}{g(\mathbf Z)}\right]-p^2 }{N_{sim}} 
 \label{pIS}
 \end{align}
 When $\mathbb E_{f}\big[\indic{ \mathscr D }(\mathbf Z) \frac{f(\mathbf Z)}{g(\mathbf Z)}\big] <\infty$ and the conditions above are verified, we have a central limit theorem on $\hat p_{IS}$:
 \begin{equation}
 \sqrt{N_{sim}} (\hat p_{IS}-p)\longrightarrow \mathcal N(0,\sigma_{IS}^2)\quad \mbox{where}\quad \sigma_{IS}^2=\mathbb E_{f}\big[\indic{ \mathscr D }(\mathbf Z) \frac{f(\mathbf Z)}{g(\mathbf Z)}\big] - p^2.
 \end{equation}

 Thus the use of importance sampling on PDMP trajectories requires the following three conditions:
 \begin{itemize} 
 \item[(C1)] We have a measure $\zeta$ on the trajectory space, and the trajectory $\mathbf Z$ of the system state has density $f$ with respect to $\zeta$ 
 \item[(C2)] We are able to simulate trajectories according to an importance process $\mathbf Z^{\prime} $ which has density $g$ with respect to $\zeta$ on $\mathscr D $ such that $\mathbb E_{f}\big[\indic{ \mathscr D }(\mathbf Z) \frac{f(\mathbf Z)}{g(\mathbf Z)}\big] <\infty$.
 \item[(C3)]  $\zeta$-almost everywhere in $ \mathscr D$ we have $\ f(\mathbf{z} )\ne 0 \Rightarrow g(\mathbf{z} )\ne 0$  
 \end{itemize} The existence of a reference measure is an important theoretical argument, but it can also be used to characterize the admissible importance processes. Knowing the reference measure $\zeta$ tells us what we can modify in the law of $\mathbf Z$ to obtain an importance process $\mathbf Z^\prime$ with a well-defined likelihood ratio. It is a valuable information to know to which extent we can modify the density $f$ to get the density $g$, because the variance $\sigma_{IS}^2$ depends on the density $g$.
 
 It is theoretically possible to design an importance sampling strategy with zero variance, indeed, it suffices to use an importance process with a density \begin{equation}
     g^*(\mathbf z)=\dfrac{\indic{\mathscr D}(\mathbf z)}{p}f(\mathbf z)\ .
 \end{equation} In practice, however, we cannot reach this zero variance, as we do not know the value of $p$. The expression of $g^*$ rather serves as a guide to build an efficient and explicit density $g$. Indeed we can try to choose a density $g$ as close as possible from $g^*$ in order to get a strong variance reduction. \\

\subsubsection{Our contributions to the literature}
 Many authors have used importance sampling on particular cases of PDMP sometimes without noting it was PDMPs, see \cite{labeau1996a, labeau1996b, lewis1984, Marseguerra1996}. Sometimes, the authors using PDMPs avoid considering automatic control mechanisms which activate and deactivate components depending on the values of physical variables. Such automatic control mechanisms play an important part in power generation systems, and therefore that can not be avoided in our case. Also, the modeling of control mechanisms implies to work with a special kind of PDMPs, which are the PDMPs with boundaries. These PDMP are typically the kind for which the reference measure is complex. In \cite{Marseguerra1996,ramakrishnan2016unavailability}, importance sampling is used on PDMP while taking into account automatic control mechanisms but the reference measure is not clearly identified, and so far we have not found a proof that likelihood ratios involved in importance sampling on PDMP are always defined. In Section \ref{sec:zeta}  we provide a reference measure for PDMP trajectories. This allows to define  the likelihood ratios for PDMP trajectories and to use the importance sampling method, but also to identify the admissible importance processes. Our major contributions are presented in Section \ref{sec:Opti} where  the characteristics of the optimal importance process are identified, and used to  propose a convenient way to build the importance process in practice. Note that the characteristics of the optimal importance process are identified for the general case of  PDMP, therefore our result can be generalized to any subclass of the PDMP process, like Markov chains, or  continuous time Markov chains, or queuing systems.

\subsubsection{Optimization of the variance reduction}
\label{subsec:CE}

Finding the optimal importance process is equivalent to solving the following minimization problem: $$g^{*}=\underset{g}{\mbox{argmin }} \mathbb E_{f}\big[\indic{ \mathscr D }(\mathbf Z) \frac{f(\mathbf Z)}{g(\mathbf Z)}\big]$$ Minimizing a quantity on a density space being difficult, we usually consider a parametric family
of importance densities $\{g_\alpha \}$ and look for a parameter $\alpha$ which yields an estimator with the smallest possible variance. Under favorable circumstances the form of the family can be determined by a large deviation analysis \cite{dupuis2004importance, heidelberger1995fast, siegmund1976importance}, but the large deviation method is difficult to adapt to PDMP with boundaries which are degenerate processes with state spaces with complicated topologies. Therefore we focus on other methods which rather try to minimize an approximation of the distance between the importance density $g$ and the optimal one $g^{*}$. For instance, if the approximated distance happens to be $ D(g,g^{*}) = \mathbb E_{f}\big[ \frac{g^{*}(\mathbf Z)}{g(\mathbf Z)}\big] $ it is equivalent to minimize the variance of the estimator, and if we consider the Kullback-Leibler divergence so that $D(g,g^{*}) = \mathbb E_{g^{*}}\Big[\log\big( \frac{g^{*}(\mathbf Z)}{g(\mathbf Z)}\big)\Big] $, we would be using the Cross-Entropy method \cite{tutorialCE, zio2013monte}. These two options have been compared on a set of standard cases in \cite{chan2011comparison}. They yielded similar results, though results obtained with the Cross-Entropy seemed slightly more stable than with the other option. In \cite{zuliani2012rare} the Cross-Entropy method was applied to a model equivalent to a PDMP without boundaries and showed good efficiency. Therefore we choose this method to select the parameters of the importance process in our paper. Of course, the efficiency of this procedure strongly depends on the choice of the parametric family of importance densities.\\

The rest of the paper is organized as follows: Section~\ref{sec:PDMP} introduces our model of multi-component system based on a Piecewise deterministic Markovian process. In Section~\ref{sec:zeta}, we introduce a reference measure on the space of the PDMP trajectories and study the admissible importance processes. In Section~\ref{sec:Opti} we present an optimal process and a clever way to build the importance process in practice. In Section~\ref{sec:Ex} we apply our adaptation of the importance sampling technique on a three-component system and compare its efficiency with the Monte-Carlo technique.\\

\section{A model for multi-component systems based on PDMP}
\label{sec:PDMP}

 \subsection{State space of the system}
 
 We consider a system with $N_c$ components and $d$ physical variables. Remember we call position the vector $X\in \mathbb R^d$ which represents the physical variables of the system, and we call mode the vector $M=(M^1,M^2, ..., M^{N_c})$ gathering the statuses of the $N_c$ components. The state of the system $Z$ includes the position and the mode: $Z=(X,M)$.\\
 
 For ease of the presentation, we consider the status of a component can be   $ON$, or $OFF$, or out-of-order (noted $F$), so that the set of modes is $\mathbb M=\{ON,OFF,F\}^{N_c}$, but as long as $\mathbb M$ stays countable, it is possible to consider more options for the statuses of the components. For instance, one could consider different regimes of activity instead of the simple status $ON$, or different types of failure instead of the status $F$. Note that we can also deal with continuous degradations, like the size of a breach in a pipe for instance: the presence of the degradation can be included in the mode and its size in the position.\\
 
 Generally, there are some components in the system which are programmed to activate or deactivate when the position crosses some thresholds. For instance, it is typically what happens with a safety valve: when the pressure rises above a safety limit, the valves opens. To take into account these automatic control mechanisms, within a mode $m$ the physical variables are restricted to a  set $\Omega_m\subset \mathbb R^d$, which is assumed open. We set $E_m=\{(x,m), x\in\Omega_m\}$, so that the state space is: \begin{equation}E=\underset{m\in\mathbb{M}}{\bigcup} E_m =\underset{m\in\mathbb{M}}{\bigcup} \Big\{(x,m), x\in\Omega_m\Big\} \end{equation}
 
 \subsection{Flow functions}
 
 In a given mode $m$, i.e. a given combination of statuses of components, the evolution of the position is determined by an ordinary differential equation. We denote by $\phi^m_x $ the solution of that equation initiated in $x$. If we consider a position state $Z_t$ at time $t$, there exists a random time $T>0$ such that $\forall s\in[0,T), $ $X_{t+s}=\phi^{M_t}_{X_t}(s)$ and $M_{t+s}=M_t$. For an initial state $z\in E$, we can introduce the flow function $\Phi_z$ with values in $E$. Regarding the evolution of the trajectory after a state $Z_t=(X_t,M_t)$, the next states are locally given by the function $\Phi_{Z_t}$:
 \begin{align}
&\exists T>0,\,\forall s\in[0,T) ,\quad \nonumber\\
& Z_{t+s} = \Phi_{Z_t}(s)= \big(\phi^{M_t}_{X_t} (s),M_t\big) = \big( X_{t+s},M_t\big) \label{flow}
 \end{align}
In practice an approximation of the function $\phi^{m}_{x}$ can be obtained by using a numerical method solving the ordinary differential equations. For instance the PyCATSHOO toolbox can use, among others, the Runge-Kutta methods up to the fourth order.

 \subsection{Jumps}
 \label{subsec:jumps}
 The trajectory of the state can also evolve by jumping. This typically happens because of control mechanisms, failures, repairs, or natural discontinuities in the physical variables. When such a jump is triggered, the current state moves to another one by changing its mode and/or its position. 
We denote by $\xoverline{ E\,}$ the closure of $E$, and $\mathscr B(E)$ the Borelian $\sigma$-algebra on $E$. We define $T$ as the time until the next jump after $t$, such that the next jump occurs at time $t+T$. The destination of this jump is determined according to a transition kernel $\mathcal K_{Z_{t+T^-}}$ where $Z_{t+T^-}\in \xoverline{ E\,}$ is the departure state of the jump. This kernel is defined by:  \begin{align}
&\forall B\in \mathscr B (E),\quad  
\mathbb P \left(Z_{t+T} \in B|Z_{t+T^-}=z^-\right) =\mathcal K_{z^-}(B) \  ,\label{kernel0} 
 \end{align} where $\mathscr B(.)$ indicates the Borelians of a set. Let $\forall z^-\in\xoverline{ E\,},\,\nu_{z^-}$ be a $\sigma$-finite measure on $E$, such that $\mathcal K_{z^-}<<\nu_{z^-}$. $\forall z^-\in\xoverline{ E\,}$, we define  $K_{z^-}$  as the density of $\mathcal K_{z^-}$ with respect to $\nu_{z^-}$, so:
 \begin{align}
&\forall B\in \mathscr B (E),\quad  
\mathbb P \left(Z_{t+T} \in B|Z_{t+T^-}=z^-\right) =\int_B K_{z^-}(z)\,d \nu_{z^-}(z) \  ,\label{kernel} 
 \end{align}
   The kernel density must satisfy $K_{z^-}(z^-)=0$, so that even  if $\nu_{z^-}$ has a Dirac point in $z^-$,   jumping on the departure state is impossible. Note that with this setting, the law of the arrival state of a jump can depend on the departure point $z^-$.  For instance, if the physical variables are all continuous, then the reference measure of the transition kernel $\nu_{z^-}$ could be defined by:  \begin{align}\forall B\in\mathscr B (E), \qquad &  \nu_{z^-} (B)=\sum_{
  \begin{array}{c}
       w\in\mathbb{M}\backslash\{m^-\},\\
       (x^-,w)\in E
  \end{array}
  } \delta_{(x^-,w)}(B) . \end{align}
 In this example, the jump kernel is discrete: 
  \begin{align}\forall B\in\mathscr B (E), \qquad &  \mathcal K_{z^-} (B)=\sum_{
  \begin{array}{c}
       w\in\mathbb{M}\backslash\{m^-\},\\
       (x^-,w)\in E 
  \end{array} }
  \mathbb P \left(Z_{t+T}=(x^-,w)|Z_{t+T^-}=(x^-,m^-)\right)\delta_{(x^-,w)}(B) , \end{align}
 and it is generally the case, but one can imagine some cases where the kernel  include a continuous part.  For instance consider that the physical variables have two dimensions, the first corresponding to pressure on a concrete structure, and the second to the size of a crack in the structure. One can consider that the crack length increase in a jerky way, and that the amplitude of the increase  is random and has a continuous law. For a jump triggered from a state $z^-=\big((x_1^-,x_2^-),m^-\big)\in E$ we could have:
 \begin{align}\forall B\in\mathscr B (E), \qquad &  \nu_{z^-} (B)= \int_{\big\{y_2>0\big|\big((x_1^-,y_2),m^-\big)\in B\big\}}d\mu_{Leb}(y_2) \end{align}
 where $\mu_{Leb}(.)$ corresponds to the Lebesgue measure, and  \begin{align}\forall B\in\mathscr B (E), \qquad &  \mathcal K_{z^-} (B)=\int_B K_{z^-}(z)\,d \nu_{z^-}(z) =\int_{\big\{y_2>0\big|\big((x_1^-,y_2),m^-\big)\in B\big\}}K_{z^-}\big(\big((x_1^-,y_2),m^-\big)\big)d\mu_{Leb}(y_2) \end{align} 
 with $K_{z^-}\big(\big((x_1,x_2),m\big)\big)=0$.  We think, that the cases of non-discrete jump kernel should be rather rare in the reliability analysis field, but PDMPs are also used in other fields, like finance, where  non-discrete jump kernel could be more common and for which the use of importance sampling can be of interest too \cite{rolski2009stochastic}. That is why we keep the most general form of PDMP, which can handle any type of jump kernel.
 
 \subsection{Jump times}
 Now, assuming that $Z_t=z$, we present the law of the time until the next jump after $t$, which is denoted by~$T$.
 \subsubsection*{Jumps at boundaries}
 For $m\in \mathbb M$, let $\partial\Omega_m$ be the boundary of $\Omega_m$. The boundary of the set $E_m$ is the set $\partial E_m =\{(x,m), x\in\partial\Omega_m\} $. For $z=(x,m)\in E$, we define $t^*_z=\inf\{s>0,\Phi_z(s)\in \partial E_m\}$ the time until the flow hits the boundary. We take the convention $t^*_z= + \infty\ $ if $\{s>0,\Phi_z(s)\notin E_m\}=\emptyset$.
Assume that the system starts in state $z=(x,m)\in E$. When the flow leads the position out of its restricted set $ \Omega_m$, i.e. the state touches $\partial E_m$, an automatic jump is triggered (see the scheme in \ref{sautdet}), and $T=t^*_z$.

\begin{figure}[ht]\centering
\includegraphics[width= 0.9\linewidth]{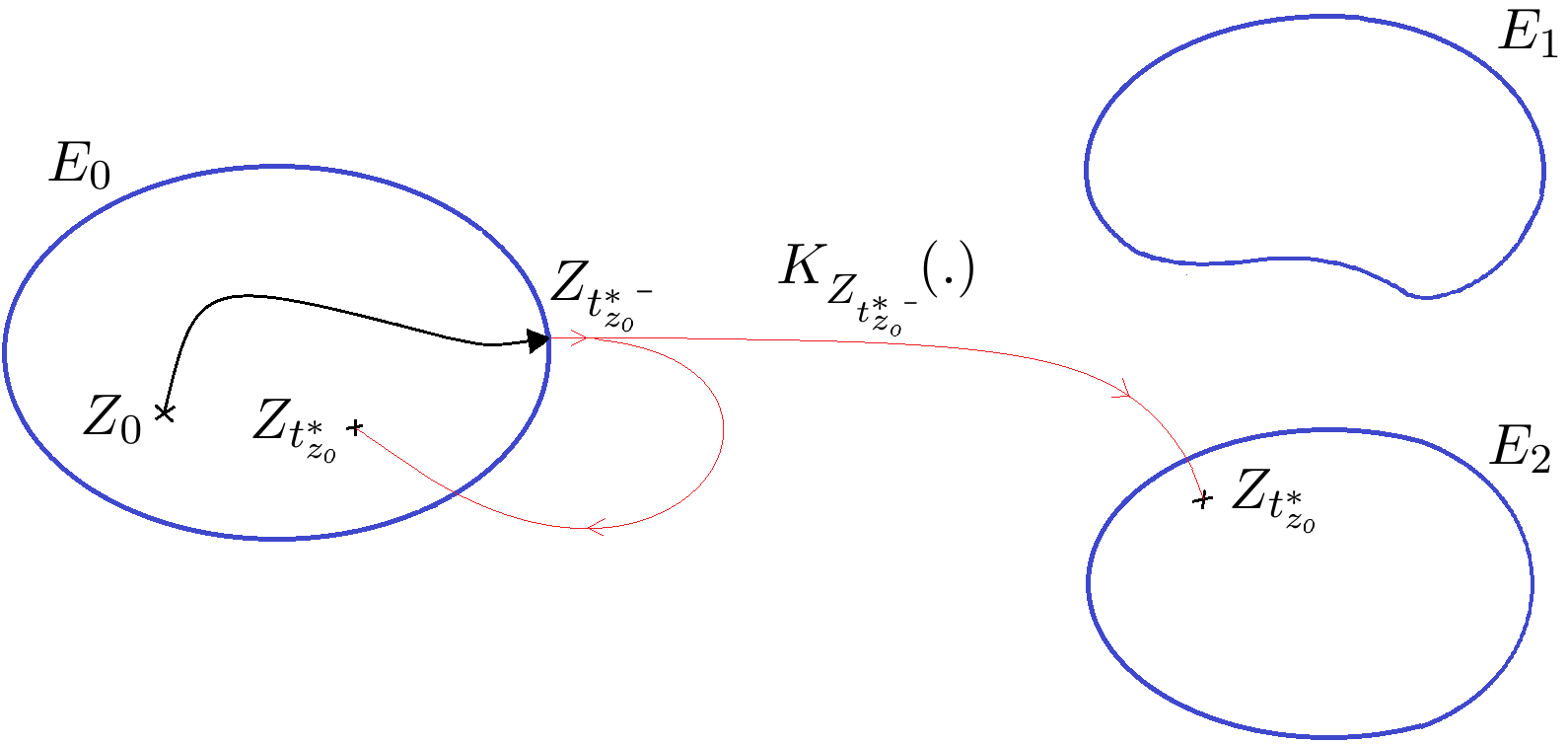}
\caption{A jump at boundary.\label{sautdet}} 
\end{figure}

Boundaries can be used to model automatic control mechanisms, or any automatic change in the status of a component. For instance in a dam, if the water level $X$ reaches a given threshold $x_{max}$ the evacuation valve automatically opens to avoid overflow. If $M= C ,\ O ,\ F $ represent respectively the modes where the valve is closed, or opened, or failed, this control system could be modeled by setting $\Omega_{C}=(0,x_{max})$ and $K_{(x_{max},C)}(\{(x_{max},O)\})=1$.\\

 \subsubsection*{Spontaneous jumps}
 \begin{figure}[hb] \centering
 \includegraphics[width=0.9\linewidth]{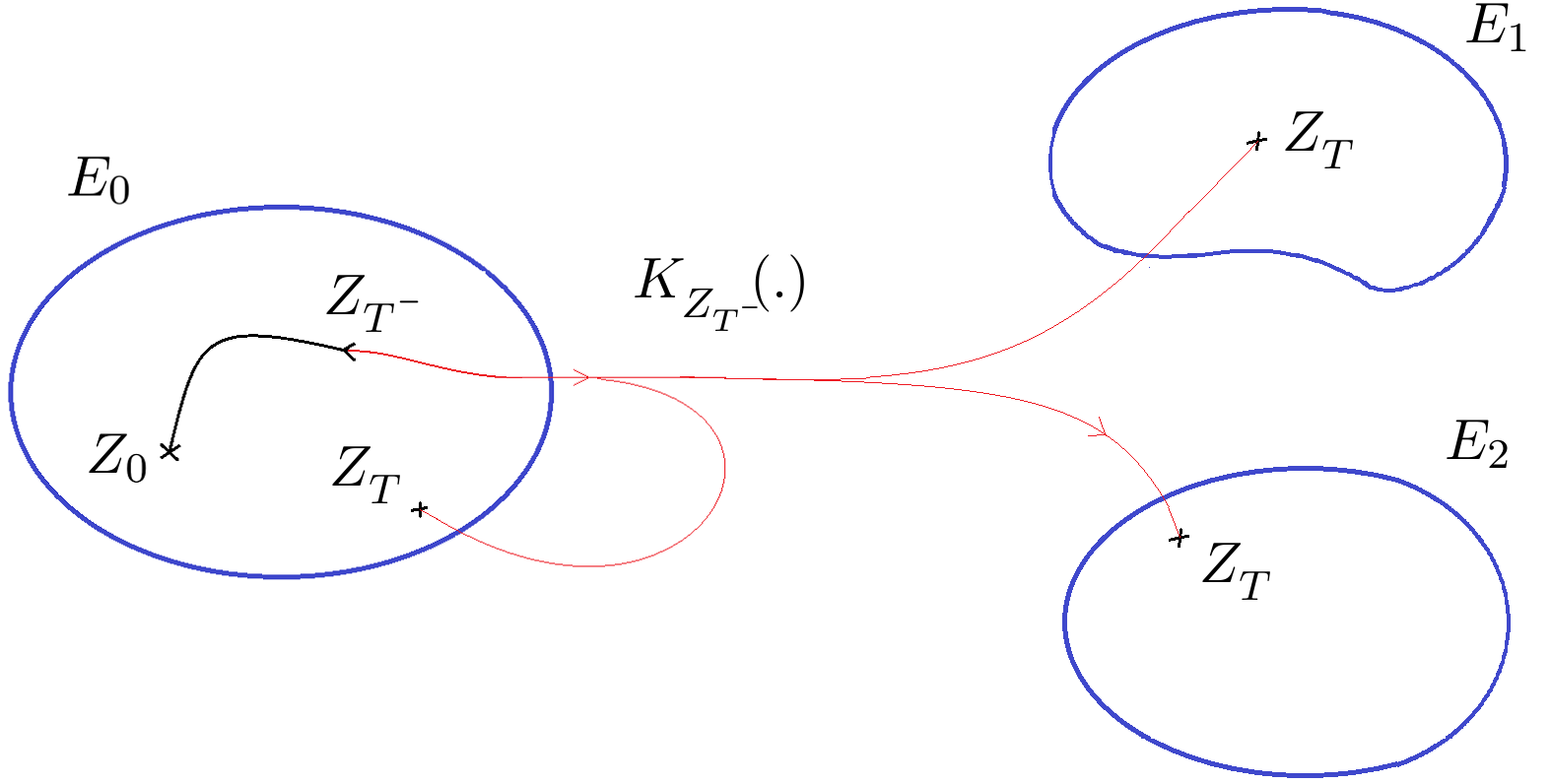}
 \caption{A spontaneous jump.\label{sautalea}} 
 \end{figure}
 The trajectory can also jump to another state when a random failure or a repair occurs (see Figure \ref{sautalea}). The distribution of the random time at which it happens is usually modeled through a state-related intensity function $\lambda:E\to\mathbb R_+$. For $z \in E$, $\lambda(z )$ represents the instantaneous probability (also called hazard rate) of having a failure or a repair at state $z $. If $Z_t=z$ and $T$ is the duration until the next jump, $\forall s< T$ we have $Z_{t+s}=\Phi_{z }(s)$. To simplify the notations in the following, we introduce the time-related intensity $\lambda_z$ such that $\lambda_{z}(s)=\lambda(\Phi_{z}(s))$ and $\Lambda_{z}(s)=\int_{0}^{s} \lambda \big(\Phi_{z}(u)\big) du$. 
If $\mathbb P_{z}(.)$ is the probability of an event knowing $Z_t=z$, we have: 
\begin{equation}
\mathbb P_{z}(T\leq s)=\left\{\begin{array}{cr}
1-\exp\left[-\Lambda_{z}(s)\right] & \mbox{ if }s<t^*_z\, ,\\
1& \mbox{ if }s\geq t^*_z\, .\\
\end{array}\right.
\label{interjump-time}
\end{equation} 
The law of $T$ has a continuous and a discrete part (see Figure \ref{probaT}). \begin{figure} [h]
\includegraphics[width=0.6\linewidth]{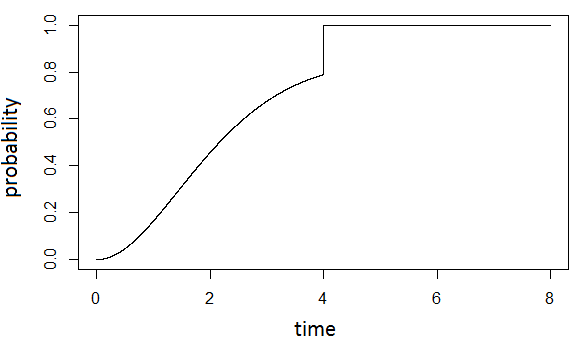}\centering
\caption{An example of the cumulative distribution function of $T$,\label{probaT}} 
{\small where $x\in\mathbb R^+$, $z=(x,m)\in E$, $\Phi_z(t)=(x+t,m)$, $\lambda(z)= \frac{x(5-x)}{12}$, and $t^*_z=4$ }
\end{figure}
As there is a discontinuity at $t_z^*$ in the cumulative distribution function of $T$, its reference measure  $T$ must include  a Dirac point at $t_z^*$ and therefore depends on $z$. We denote $\mu_z$ the reference measure of $T$  such that:
 \begin{align}
 \forall B\in \mathscr B\big([0,t_z^*]\big),\qquad \mu_{z}\big(B\big)=\mu_{Leb}\big(B\cup [0,t_z^*)\big)+\delta_{t_z^*}\big(B\big)  . \label{measureMu}
 \end{align} 
 This measure  will be useful to define the dominant measure $\zeta$ in Section \ref{sec:zeta}. It also allows to reformulate the law of $T$ with an integral form:
 \begin{align}
 \mathbb P_{z} (T\leq s) &=\int_{(0,s]} \bigg(\lambda_z(u) \bigg)^{\mathbbm{1}_{ u < t^*_z }}\exp\Big[-\Lambda_{z }(u)\Big] d\mu_z(u) \ .\label{Tlaw}
 \end{align}

 \subsection{A link between jump rate and the hazard rates of the possible transitions}
Note that the equation  \eqref{interjump-time}, or \eqref{Tlaw}, gives the time of the next jump, but it does not tell whether the transition is a failure, or a repair, or an automatic control mechanism. The type of the transition triggered is determined by the transition Kernel $\mathcal K_{Z_{t+T}^-}$.  For each jump,  we consider that there can be a countable number of possible transitions. Each type of transition is indexed by a number in the countable set $J$. When $Z_{t+T}^- \in \partial E$, $\mathcal K_{Z_{t+T}^-}$ can take an arbitrary form, but   when $Z_{t+T}^- \in  E$, the density of the kernel  $ K_{Z_{t+T}^-}$ is linked to the hazard rates of the possible transitions, as shown in the following. 
 If a transition is indexed by $j\in J$, we denote by $T^j$ the time between $t$ and the next occurrence of this transition, taking by convention $T^j=+\infty$ if the transition does not occur. This way the time of the next jump satisfies: \begin{equation}
    T=\min\left[\{T^{j},  \forall j\in J\}\cup \{t^*_z,\}\right]. \label{eq:TTj}
\end{equation}  Let $\lambda^j:E\to\mathbb R_+$ be its associated state-related intensity function, such that:\begin{equation}
 \forall s<t_z^*,\quad \mathbb P_{z}(T^j\leq s)=1-\exp\left[\text{-}\int_{0}^{s} \lambda^j \big(\Phi_z(u)\big)\right]du .\end{equation} 
 For instance, if the transition $j$ corresponds to a failure   the function $\lambda^j$ is the associated  failure rate,  and respectively if the transition $j$ corresponds to a repair, the function $\lambda^j$ is the associated  repair rate. Knowing  $Z_t=z_t=(x,m)$, and therefore, knowing the  path   given by $\phi_x^m$  that the positions are following, we make the assumption that the times $T^j$ are independent. This assumption is true if the position gathers all the variables affecting  the different types of transitions when the system is in mode $m$. According to the equation \eqref{eq:TTj}, this conditional independence implies that:
\begin{equation}
\forall z ^-  \in E,\qquad \lambda(z^-) =\sum_{j\in J} \lambda^j(z^-)\ . \label{lambda}
\end{equation}
Note the equations \eqref{lambda} is only valid when the   departure state $z^-$ is not on a boundary.
We denote by $B^j_{z^-}$  the possible arrival states of a jump when the transition $j$ is triggered and when the departure state is $z^-\in E$. We assume that the different types of transition are exclusive, meaning that for $i\neq j$ we have $B^i_{z^-}\cap B^j_{z^-}=\emptyset$. Then the probability of triggering the transition $i$ from the departure state $z^-$ is  $\mathcal K_{z^-}(B^i_{z^-})$, and we have:
\begin{equation}
 \forall z^-\in E,\qquad \mathcal K_{z^-}(B^i_{z^-}) = \frac{\lambda^i(z^-)}{\lambda(z^-)}\ . \label{eq:Klambda}
\end{equation}
Similarly the equation \eqref{eq:Klambda} is also valid only when the   departure state  $z^-$ is not on a boundary. When $z^-\in\partial E$ there is no link between $\lambda$ and $\mathcal K_{z^-}$.

\subsection{Generate a trajectory}
In order to generate a realization of the PDMP, one can follow these steps \cite{davis1984,davis1993markov,dufour2015numerical}: \begin{enumerate}
 \item Start at $t=0$ with state $Z_t=z_t$
 \item Generate $T$ the time until the next jump using \eqref{interjump-time} or  \eqref{Tlaw},  and \eqref{lambda} 
 \item Follow the flow $\Phi$ until $T$ using \eqref{flow}
 \item Generate $Z_{t+T}=z_{t+T}$ the arrival state of the jump knowing the departure state is $Z_{t+T^-} =\Phi_z(T)$ using \eqref{kernel} 
 \item Taking $t:=t+T$, repeat steps 1 to 5 until a trajectory of size $t_f$ is obtained
\end{enumerate}

 \subsection{Example}
 \label{subsec:sys3comp}
 As an example of a system, we consider a room heated by three identical heaters. $X_t$ represents the temperature of the room at time $t$. $x_e$ is the exterior temperature. $\beta_1$ is the rate of the heat transition with the exterior. $\beta_2$ is the heating power of each heater. The differential equation giving the evolution of the position (i.e. the temperature of the room) has the following form: $$\frac{d\,X_t}{dt}=\beta_1 (x_e-X_t)+\beta_2 \mathbbm 1_{M^1_t\, or\, M^2_t\, or\, M^3_t=ON}\ .$$ \begin{figure}[ht]\centering
 \includegraphics[width= \linewidth]{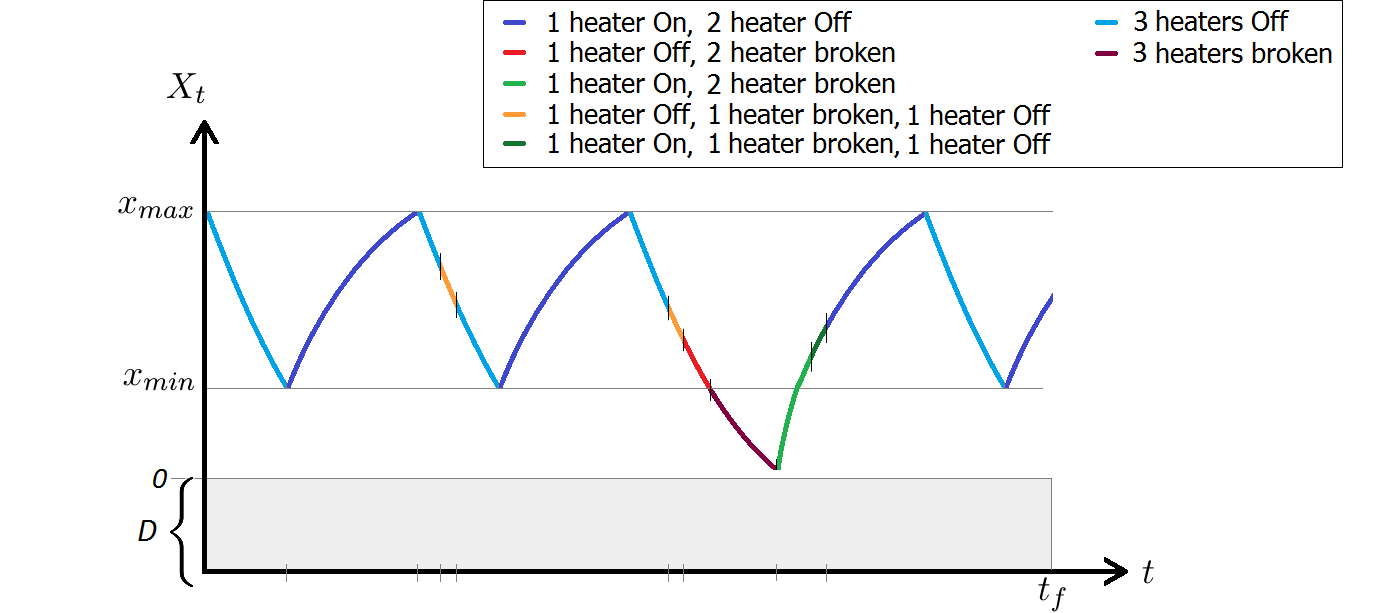} \caption{A possible trajectory of the heated-room system\label{fig:schemetraj}}\vspace{-0.7em}{\footnotesize(the mode is represented with colors)}
 \end{figure}
 
 The heaters are programmed to maintain the temperature within an interval\linebreak$(x_{min} , x_{max})$ where $x_e<0<x_{min}$. Heaters can be on, off, or out-of-order, so $\mathbb M=\{ON,OFF,F\}^{3}$. We consider that the three heaters are in passive redundancy in the sense that: when $X\leq x_{min}$ the second heater activates only if the first one is failed, and the third one activates only if the two other heaters are failed. When a repair of a heater occurs, if $X\leq x_{min}$ and all other heaters are failed the heater status is set to $ON$, else the heater status is set to $OFF$.
 To handle the programming of the heaters, we set \scalebox{0.9}{$\Omega_{m}=(-\infty,x_{max})$} when all the heaters are failed $m=(F,F,F)$ or when at least one is activated, otherwise we set \scalebox{0.9}{$\Omega_{m}=(x_{min},x_{max})$}. \\
 Due to the continuity of the temperature, the reference measure for the kernel is $\forall B\in\mathscr B (E),$ $\nu_{(x,m)}(B)=\sum_{m^+\in\mathbb M\backslash\{m\}}\delta_{(x,m^+)}(B)$.
 On the top boundary in $x_{max}$, heaters turn off with probability 1. On the bottom boundary in $x_{min}$, when a heater is supposed to turn on, there is a probability $\gamma=0.01$ that the heater will fail on demand. So, for instance, if $z^-=\big(x_{min}, (OFF,F,OFF)\big)$, we have $K_{z^-}\big(x_{min}, (ON,F,OFF)\big)=1-\gamma$, \linebreak and $K_{z^-}\big(x_{min}, (F,F,ON)\big)=\gamma(1-\gamma)$, and $K_{z^-}\big(x_{min}, (F,F,F)\big)=\gamma^2$.\\
 For the spontaneous jumps that happen outside boundaries, we consider the position is not modified during the jumps  and,   if the transition $j$   corresponds to the failure of a heater, then, for $z^-=(x^-,m^-m)\in E$, $\lambda^{j}(z^-)=0.0021+0.00015\times x^- $ 
and, if the transition $j$ corresponds to a repair, then, for $z^-=(x^-,m^-m)\in E$, $\lambda^{j}(z^-)=0.2$. A possible trajectory of the state of this system is depicted in figure \ref{fig:schemetraj}. Here the system failure occurs when the temperature of the room falls below zero, so $D=\{(x,m)\in E, x<0\}$.\\

\section{A reference measure for trajectories}
\label{sec:zeta}
We have seen in Section \ref{subsec:jumps} that when the position is restricted to a bounded set in some modes, the time to the next jump can be a hybrid random variable. We have to be cautious when considering the density of a trajectory of a PDMP for several reasons: first the reference measure for the times between the jumps is a mixture of Dirac and Lebesgue measures,
secondly these hybrid jumps may occur multiple times and in a nested way in the law of the trajectory of PDMP. Indeed, with these mixtures of Dirac and Lebesgue measures involved, the existence of a sigma-finite reference measure on the trajectory space is not obvious, yet it is mandatory to properly define the density of a trajectory. The existence of a reference measure is therefore crucial, because it preconditions the existence of the likelihood ratio needed to apply the importance sampling method.\\

We begin this Section by introducing a few notations: For a trajectory $\mathbf Z$ on the observation interval $[0,t_f)$, we denote by $N$ the number of jumps before $t_f$, and by $S_{k}$ the time of the $k$-th jump with the conventions $S_{0}=0$, and $S_{N+1}=t_f$. $\forall k<N ,\ T_k=S_{k+1}-S_{k}$ is the duration between two jumps and $T_N=S_{N+1}-S_N=t_f-S_N$ is the remaining duration between the last jump and $t_f$. One can easily verify that the sequence of the $(Z_{S_k},S_{k+1}-S_{k})$ is a Markov chain: it is called the embedded Markov chain of the PDMP \cite{davis1984}.\\ 

 \begin{figure}[h]\centering
 \includegraphics[width= 0.7\linewidth]{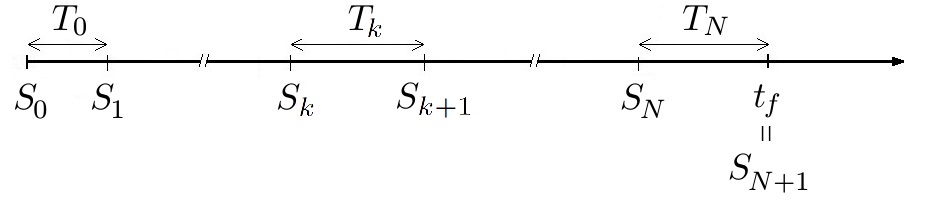}\vspace{-3ex}
 \caption{notations}\label{fig:notations}
 \end{figure} 

\subsection{The law of the trajectories}

The main idea in building the law of the trajectory $\mathbf{Z}$ is to summarize the trajectory by the truncated embedded Markov chain of the process: the vector $\big(Z_{S_0},T_0, \dots, Z_{S_N},T_N\big)$. This vector is also called the skeleton of the trajectory. As the trajectory is piecewise deterministic, we only need to keep the states of the arrivals of the jumps and the durations between the jumps to describe the trajectory. If we have the vector $\big(Z_{S_k},T_k\big)_{k\leq N}$ then we have enough information to reconstruct the trajectory using \eqref{flow} because we know the flow function $\Phi$. Noting $\Theta$ the map that changes $\mathbf{Z} $ into $\big(Z_{S_k},T_k\big)_{k\leq N}$, the law of $\mathbf{Z} $ can be defined as the image law of $\big(Z_{S_k},T_k\big)_{k\leq N}$ through $\Theta$. We denote by $\mathbf E$ the set of the trajectories defined on $[0,t_f)$. For $n\in\mathbb N$, let $A_n=\Big\{\big(z_{s_k},t_k\big)_{k\leq n}\in(E\times\mathbb R^{*}_\text{+})^{n},\, \overset{n}{\underset{i=0}{\sum}} t_i=t_f\Big\}$, so that $\Theta^{-1} (A_n)$ is the set of the trajectories including $n$ jumps. The sets $(\Theta^{-1} (A_n))_{n\in\mathbb N}$ form a partition of  $\mathbf E$. The sets $(A_n)_{n\in\mathbb N}$ form a partition of the set of the skeletons. \\

We can get the law of $\big(Z_{S_k},T_k\big)_{k\leq N}$, by using the dependencies between its coordinates. Thanks to \eqref{Tlaw} and \eqref{kernel} we can get the density of $T_k$ knowing $Z_{S_k}$ with respect to $\mu_{Z_{S_k}}$, and the density of $Z_{S_{k+1}}$ knowing $\big(Z_{S_k},T_k\big)$ with respect to $\nu_{Z_{S_{k+1}^-}} $, where $Z_{S_{k+1}^-}=\Phi_{Z_{S_k}}(T_k)$:
\begin{equation}
f_{T_k|Z_{S_k}=z}(u)=\Big(\lambda_{z}(u) \Big)^{\mathbbm{1}_{ u < t^*_{z} }}\exp\Big[\,\text{-}\,\Lambda_{z }(u)\Big]\ ,
\label{Tdensity}
\end{equation}
\begin{equation}
f_{Z_{S_{k+1}}|Z_{S_k},T_k}(z)=K_{Z_{S_{k+1}^-}}(z)\ .
\label{Z+density}
\end{equation} 
Using the Markov structure of the sequence $\big(Z_{S_k},T_k\big)_{k\leq N}$, the law of $\big(Z_{S_k},T_k\big)_{k\leq N}$ can be expressed as an integral of the product of the conditional densities given by \eqref{Tdensity} and \eqref{Z+density}. \\

We define the $\sigma$-algebra $\mathscr S$ on the set of the possible values of $\big(Z_{S_k},T_k\big)_{k\leq N}$ as the $\sigma$-algebra generated by the sets in 
 $\underset{\ n\in \mathbb N^* }{\bigcup} \mathscr B\Big(\Big\{\big(z_{s_k},t_k\big)_{k\leq n}\in(E\times\mathbb R^{*}_\text{+})^{n},\, \overset{n}{\underset{i=0}{\sum}} t_i=t_f\Big\}\Big)$.
 \begin{mydef}
 The law of the trajectory is then defined as follows, for $B\in \mathscr S$
 \begin{align}
 \mathbb P_{z_o}\Big(\mathbf{Z} \in \Theta^{-1}( B) \Big) = &\int_{B} \ \prod_{k=0}^{n} \Big(\lambda_{z_{k} }(t_k) \Big)^{\mathbbm{1}_{ t_k< t^*_{z_{k}}}}\exp\Big[-\Lambda_{z_{k}}(t_k)\Big] \prod_{k=1}^{n }K_{z_{k}^-}(z_{k})\nonumber\\
&\quad\times d\delta_{t^*_{n}}(t_n)\ d\nu_{z_{n }^-}(z_n)\ d\mu_{t^*_{z_{n-1}}}(t_{n-1}) \ ...\ d\nu_{z_{ 1}^-}(z_1)\ d\mu_{t^*_{z_{o}} }(t_{0})\ , 
\label{eq:loitraj}
 \end{align}
 where $z_j^-=\Phi_{z_{j-1}}(t_{j-1})$, and $t^*_{n}=t_f-\sum_{i=0}^{n-1}t_i$.
 \end{mydef}
 Note that, depending on the set $B$, $n$ can take different values in the equation \eqref{eq:loitraj}. Implicitly, the equation \eqref{eq:loitraj}, states that:  \begin{align}  \mathbb P_{z_o}\Big(\mathbf{Z} \in \Theta^{-1}( B) \Big)&=\mathbb P_{z_o}\bigg(\mathbf{Z} \in \Theta^{-1}\Big( \bigcup_{n\in \mathbb N}B\cap A_n\Big) \bigg)\nonumber\\
& =\sum_{n\in \mathbb N} \mathbb P_{z_o}\Big(\mathbf{Z} \in \Theta^{-1}( B\cap A_n) \Big)\nonumber\\
& =\sum_{n\in \mathbb N}\int_{B\cap A_n} \ \prod_{k=0}^{n} \Big(\lambda_{z_{k} }(t_k) \Big)^{\mathbbm{1}_{ t_k< t^*_{z_{k}}}}\exp\Big[-\Lambda_{z_{k}}(t_k)\Big] \prod_{k=1}^{n }K_{z_{k}^-}(z_{k})\nonumber\\
&\quad \qquad\times d\delta_{t^*_{n}}(t_n)\ d\nu_{z_{n }^-}(z_n)\ d\mu_{t^*_{z_{n-1}}}(t_{n-1}) \ ...\ d\nu_{z_{ 1}^-}(z_1)\ d\mu_{t^*_{z_{o}} }(t_{0})\ . \label{eq:loitraj2}
 \end{align} 
Also note that with our construction, this is a probability law on the space of the trajectories that satisfy \eqref{flow}, not on the set of all the trajectories with values in $E$.

 \subsection{The dominant measure and the density}
 \begin{mydef}
 We define the measure $\zeta$ so that 
 \begin{align}
 \zeta (\Theta^{-1}( B))= & \underset{ \mbox{\hspace{-12ex} } (z_{_k},t_{_k})_{k\leq n}
 \in B}{\int\quad d\delta_{t^*_{n}}(t_n)\ d\nu_{z_{n }^-}(z_n)}\ d\mu_{t^*_{z_{n-1}}}(t_{n-1}) \ ...\ d\nu_{z_{ 1}^-}(z_1)\ d\mu_{t^*_{z_{o}} }(t_{0}) \ .
 \label{zeta}
\end{align}
Note that, like in equation \eqref{eq:loitraj}, in the equation \eqref{zeta} $n$ can take different values depending on the set $B$.
\end{mydef}
\begin{myth}
If $\exists C>0, \forall z\in\xoverline{ E\,},\ \nu_z(E)<C$ and $t_f<\infty$, then $\zeta$ is a $\sigma$-finite measure. By Radon-Nikodym theorem, the density of a trajectory $\mathbf z=\Theta\big((z_{_0},t_{_0}),\, ...\, ,(z_{_n},t_n)\big)$ with respect to the measure $\zeta$ is 
 \begin{align}
f(\mathbf z) =\prod_{k=0}^{n} \Big(\lambda_{z_{k} }(t_k) \Big)^{^{\mbox{\hspace{-1.1ex}}\mathbbm{1}_{ t_k< t^*_{z_{k}}}}}\mbox{\hspace{-2.1ex}}\exp\Big[-\Lambda_{z_{k}}(t_k)\Big] \prod_{k=1}^{n }K_{z_{k}^-}(z_{k})\ ,
\label{density}
 \end{align}
 where $n$ is the number of jumps in the trajectory $\mathbf z$.
 \label{thm:sigmatefinite}
 \end{myth}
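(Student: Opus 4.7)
The plan is to establish the $\sigma$-finiteness of $\zeta$ by producing an explicit countable partition of $\mathbf E$ into sets of finite $\zeta$-measure, and then to read off the density by direct comparison of equations \eqref{eq:loitraj2} and \eqref{zeta}.

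First I would use the natural decomposition $\mathbf E=\bigcup_{n\in\mathbb N}\Theta^{-1}(A_n)$ by the number of jumps, which is countable and disjoint. The key task is to show $\zeta(\Theta^{-1}(A_n))<\infty$ for every fixed $n$. For this I would estimate the iterated integral in \eqref{zeta} by bounding each factor separately: for each kernel reference measure, the assumption $\nu_z(E)\le C$ gives $\int d\nu_{z_k^-}(z_k)\le C$ uniformly in $z_k^-$; for each time integral, the constraint $\sum_{i\le n}t_i=t_f$ forces $t_k\in[0,t_f]$ on the integration domain, and from the definition \eqref{measureMu} of $\mu_z$ we get $\mu_{t^*_{z_{k-1}}}([0,t_f])\le t_f+1$ (a Lebesgue piece of length at most $t_f$ plus at most one Dirac); and the final $\delta_{t^*_n}$ contributes a factor~$1$. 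Combining these bounds by Fubini--Tonelli yields
\begin{equation*}
\zeta(\Theta^{-1}(A_n))\le C^{\,n+1}(t_f+1)^n<\infty,
\end{equation*}
so the partition $(\Theta^{-1}(A_n))_n$ witnesses $\sigma$-finiteness.

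For the density, I would compare \eqref{eq:loitraj2} with \eqref{zeta} term by term. On each piece $A_n$, both expressions are iterated integrals against the \emph{same} product of $\nu$- and $\mu$-measures (and the terminal Dirac $\delta_{t^*_n}$), the only difference being that \eqref{eq:loitraj2} carries the nonnegative integrand
\begin{equation*}
F_n\bigl((z_k,t_k)_{k\le n}\bigr)=\prod_{k=0}^{n}\bigl(\lambda_{z_k}(t_k)\bigr)^{\mathbbm{1}_{t_k<t^*_{z_k}}}\exp\bigl[-\Lambda_{z_k}(t_k)\bigr]\prod_{k=1}^{n}K_{z_k^-}(z_k).
\end{equation*}
Pulling $\Theta$ back, the function $f$ defined piecewise on $\Theta^{-1}(A_n)$ by $f\circ\Theta^{-1}=F_n$ is measurable with respect to $\mathscr S$, and \eqref{eq:loitraj2} rewrites exactly as $\mathbb P_{z_o}(\mathbf Z\in\Theta^{-1}(B))=\int_{\Theta^{-1}(B)}f\,d\zeta$ for every $B\in\mathscr S$. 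By the uniqueness part of Radon--Nikodym (applicable because $\zeta$ has just been shown $\sigma$-finite), $f$ is the density, which is exactly formula \eqref{density}.

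The only subtlety I anticipate is the bookkeeping around the last jump: because the trajectory must end at $t_f$, the time $t_n$ is frozen at $t^*_n=t_f-\sum_{i<n}t_i$, and $z_n^-$ depends on the whole past through $\Phi$. I would spell out explicitly that $\delta_{t^*_n}$ contributes trivially to the bound, and that the recursive definition $z_k^-=\Phi_{z_{k-1}}(t_{k-1})$ does not interfere with Fubini since the integrations are taken in the order written (outermost in $t_0$, then $z_1$, and so on) so each $z_k^-$ is already measurable with respect to the variables integrated before $z_k$. Beyond that, everything reduces to the uniform bound $\nu_z(E)\le C$ and the finiteness of $t_f$, which is precisely where the two hypotheses of the theorem are used.
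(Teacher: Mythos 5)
Your argument is correct and follows essentially the same route as the paper's: Appendix~\ref{appendix-sec1} likewise partitions the trajectory space into the sets $\Theta^{-1}(A_n)$ indexed by the number of jumps and bounds $\zeta(\Theta^{-1}(A_n))$ by peeling off the iterated integrals using the uniform bounds $\nu_z(E)<C$ and $\mu_{t^*_z}([0,t_f])\le t_f+1$, after which the density is read off by comparing the integrand of \eqref{eq:loitraj2} with \eqref{zeta} exactly as you do. The only quibble is bookkeeping: there are $n$ $\nu$-integrals and $n$ $\mu$-integrals (the terminal Dirac contributing a factor $1$, as you note), so the bound should be $C^{\,n}(t_f+1)^n$ rather than $C^{\,n+1}(t_f+1)^n$, which is immaterial for $\sigma$-finiteness.
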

 The proof of Theorem \ref{thm:sigmatefinite} is given in appendix \ref{appendix-sec1}. \\ Note that it is always possible to choose the measures $\nu_{z^-}$ so they are all bounded by the same constant. Indeed the transition kernel is itself bounded by 1, as it is a probability measure. So, to get a measure $\zeta$ that is $\sigma$-finite, we can simply take the measures $\nu $
equal to the transition kernel, so the densities can be properly defined when the observation time $t_f$ is finite.

 \subsection{Admissible importance processes}
 Recall that an admissible importance process is any process whose law is absolutely continuous with respect to $\zeta$ (condition C2), and which has a density $g$ with respect to $\zeta$ satisfying $ \forall\, \mathbf{z} \in\mathscr D $, $\ f(\mathbf{z} )\ne 0 \Rightarrow g(\mathbf{z} )\ne 0$ (condition C3). In this Section, we  clarify the previous statement, and we identify to which extent we can modify the original process to obtain an admissible importance process. Throughout the rest of paper we denote the elements relative to this importance process with a $ ^\prime$, except for its density that is denoted by $g$.\\

 Our first remark is that condition C2 implies that the realizations of the importance process must satisfy equation \eqref{flow}. Indeed, the measure $\zeta$ involves the transformation $\Theta$ which uses the equation \eqref{flow} to rebuild a trajectory from a skeleton. Consequently, the importance process has to piecewisely follow the same flows as the original process. Similarly to the original process the importance process jumps to a new state for each change of flow. To ensure condition C2, the law of the $T^\prime_k$ has to be dominated by $\mu_{Z^\prime_{S^\prime_k}}$, and the law of $Z^\prime_{S^\prime_{k+1}}$ has to be dominated by $\nu_{Z^{\prime-}_{S_k} }$. This means that the boundaries of the $\Omega_m$'s and the set of the possible arrivals of a jump remain unchanged. So the modification of the original process focuses on the timing and nature of changes of modes, i.e. the laws of the jumps.\\
 
 To generate an importance process, we keep generating trajectories by successively generating the arrival state of a jump ($Z^\prime_{S^\prime_k}$) and the time until the next jump ($T^\prime_k $). As there is no requirement for the importance process to be Markovian, we consider that the law of a point of the trajectory $Z^\prime_t$ depends on the past values of states. As the states follow the flows piecewisely, it is equivalent to say that the law of $Z^\prime_{S^\prime_k}$ can depend on $\big(Z^\prime_{S^\prime_i},T^\prime_i\big)_{i< k}$, and that the law of $T^\prime_k$ can depend on $\big(Z^\prime_{S^\prime_i},T^\prime_i\big)_{i< k}$ and $Z^\prime_{S^\prime_k}$. 
For a jump time $S^\prime_k$, we denote $\underline Z^\prime_{S^\prime_k}=\big((Z^\prime_{S^\prime_i},T^\prime_i)_{i<k},Z^\prime_{S^\prime_k}\big)$, and we denote
 by $\lambda^\prime_{\underline z_k}(.)$ the intensity function associated to $T^\prime_k$ when $\underline Z^\prime_{S^\prime_k}=\underline z_k$. We have:
 \begin{align}
 \forall t\in (0,t^*_{z_k}],\quad \mathbb P(T_k^\prime\leq t|\underline Z^\prime_{S^\prime_k}=\underline z_k)&=\int_{(0,t]} \bigg(\lambda^\prime_{ \underline z_k}(u) \bigg)^{\mathbbm{1}_{ u < t^*_{z_k} }}\exp\Big[-\Lambda^\prime_{ \underline z_k}(u)\Big] d\mu_{z_k}(u) \label{integtime}
 \end{align}
 Noting ${\underline Z^\prime_{{S^\prime_k}^-}}=\big((Z^\prime_{S^\prime_i},T^\prime_i)_{i< k\,\text{-}1} \big)$ and $K^\prime_{\underline{z}^-}$ the importance kernel when $ \underline Z^\prime_{ {S^\prime_k}^- }={\underline{z}_{k} }^-$, we have:
 \begin{align}
 \forall B\in\mathscr{B}(E),\quad\mathbb P(Z^\prime_{S^\prime_k}\in B|{\underline Z^\prime_{{S^\prime_k}^-}}={\underline{z}_{k} }^-)
 &=\int_{B} K^\prime_{\underline{z}^-_k} (z) d\nu_{z^-_k}(z) \label{impkernel}
 \end{align}
Notice that the intensity function $\lambda^\prime_{\mathbf z_{s},s}$ in equation \eqref{integtime} does not have to be of the form $\lambda^\prime\circ\phi_{z_s}$, where $\lambda^\prime$ is a positive function on $E$. This means that at the time $S^\prime_k+t$, the   intensity does not depend only on the state~$Z^\prime_{S^\prime_k+t}$ as it would be the case if $\mathbf Z^\prime$ were a PDMP. So, in the importance process, we consider that the intensity can depend on the arrival state of the last jump and on previous pairs $(Z^\prime_{S^\prime_i},T^\prime_i)$. Therefore the importance process can be seen as a piecewise deterministic process (PDP) which is not necessarily Markovian. \\
 
 For condition C3 to be satisfied almost everywhere we can impose that almost everywhere for any $z_k\in E$, and $z_k^-\in\xoverline{ E\,}$, and $t\in (0,t^*_{z_k}]$ : \begin{align*}
 \mathbb E\big[\indic{\mathscr D}(\mathbf Z)\big| \underline Z_{S_k}=\underline z_k \big]>0,\mbox{ and } K_{z_k^-}(z_k)>0\ &\Rightarrow K^\prime_{\underline z_k^-}(z_k)>0\\
 \mathbb E\big[\indic{\mathscr D}(\mathbf Z)\big| \underline Z_{S_{k+1}^-}=(\underline z_k, t)\big]>0,\mbox{ and }\lambda_{z_k}(t)>0\ &\Rightarrow \lambda^\prime_{ \underline z_k}(t)>0.
 \end{align*} 
 Unfortunately with complex systems, the set $\mathscr D $ can be very hard to manipulate, and we do not always know if $\mathbb E\big[\indic{\mathscr D}(\mathbf Z)\big| \underline Z_{S_k}=\underline z_k \big]$ or $\mathbb E\big[\indic{\mathscr D}(\mathbf Z)\big| \underline Z_{S_k}=\underline z_k, T_k=t\big]$ are positive. So in practice we often only use the following sufficient condition which states that for almost any $z_k\in E$, and $z_k^-\in\xoverline{ E\,}$, and $t\in (0,t*_{z_k}]$ :
 \begin{align*}
 K_{z_k^-}(z_k)>0\ &\Rightarrow K^\prime_{\underline z_k^-}(z_k)>0\\
\lambda_{z_k}(t)>0\ &\Rightarrow \lambda^\prime_{ \underline z_k}(t)>0.
 \end{align*} \\

 \section{Optimal and practical importance process}
\label{sec:Opti}

\subsection{Practical importance processes and notations}
We will see in Subsection \ref{subsec:Opti} that we can restrict the search of an efficient importance process within a special class of processes without any loss in efficiency, because an optimal importance process (giving an estimator with zero variance) belongs to this special class.

The processes of this class are defined through the expressions \eqref{integtime} and \eqref{impkernel} but they do not use all the information contained in $\underline z_k$ and $\underline z_k^-$. The jump rates $\lambda^\prime_{ \underline z_k}(t)$ depend only on three variables which are : the current state $Z_{S_k+t}=\Phi_{z_k}(t)$, the time $t_f-(s_k+t)$ left before $t_f$, and the indicator $\indic{\tau_D\leq s_k+t}$ which tells if the system failure has already happened. The kernels $K^\prime_{\underline z_{k+1}^-}$ depend only on three variables, which are : the current departure state $z_{k+1}^-=\Phi_{z_k}(t_k)$, the time $t_f- s_{k+1} $ left before $t_f$, and the indicator $\indic{\tau_D\leq s_{k+1}}$.

So, to ease the presentation of such jump rates and transition kernels, we slightly modify the state space by adding an active boundary at the boundary of~$D$ and we add a coordinate on the mode which indicates if the trajectory has already visited $D$. The state now becomes $Z=\big(X,(M,M_D)\big)$ where $M_D= 0$ if $D$ has not been visited, and $1$ if it has. This way, for any time $t$ we have $Z_t=(X_t,(M_t,\indic{\tau_D\leq t}))$. For instance, with the heated-room system the set of modes becomes $\mathbb M=\{ON,OFF,F\}^3\times\{0,1\}$. The kernel $K_{Z^-}$ is unchanged when $M_D^-=M_D^+$, and is null when $M_D^-\ne M_D^+$, except at the boundary of $D$ where $K_{(0,( F,F,F ,0))}\big(0,( F,F,F,1)\big)=1$.

The three variables that determine the jump rates and kernels of the processes of the special class can now be identified by the current state and the current time. Therefore, we now consider importance processes with jump rate $\lambda^\prime_{ z_k,s_k}(t) $ and transition kernel $K^\prime_{ z^-_k, s_k} $. Such processes have the following laws of jump times and jump arrivals:
 \begin{align}
 \forall t\in (0,t^*_{z_k}],\quad &\mathbb P(T_k^\prime\leq t| Z^\prime_{S^\prime_k}= z_k, S_k^\prime=s_k)\nonumber\\
 &=\int_{(0,t]} \bigg(\lambda^\prime_{ z_k,s_k}(u) \bigg)^{\mathbbm{1}_{ u < t^*_{z_k} }}\exp\Big[-\Lambda^\prime_{ z_k,s_k}(u)\Big] d\mu_{z_k}(u) \label{integtime2}\\
 \forall B\in\mathscr{B}(E),\quad & \mathbb P(Z^\prime_{S^\prime_k}\in B|{ Z^\prime_{{S^\prime_k}^-}}={ {z}_{k} }^-, S_k^\prime=s_k)
 =\int_{B} K^\prime_{ z^-_k, s_k} (z) d\nu_{z^-_k}(z). \label{impkernel2}
 \end{align}
Note that the class of processes that can be defined by \eqref{integtime2} and \eqref{impkernel2} is included in the class of admissible importance processes. 
 Thanks to the new definition of the states, the conditional expectations \linebreak$\mathbb E\big[\indic{\mathscr D}(\mathbf Z)\big| \underline Z_{S_k}, T_k \geq t\big]$ are equal to the conditional expectations $\mathbb E\big[\indic{\mathscr D}(\mathbf Z)\big| Z_{S_k+t}=\Phi_{Z_{S_k}}(t) \big]$. This makes it possible to introduce the following important definitions:
\begin{mydef}
Let $U^*$ be the function defined on $E\times \mathbb R^+$ by:
 \begin{align}
 U^*(z,s)
 &=\mathbb E\big[\indic{\mathscr D}(\mathbf Z)| Z_{s}=z \big].
\end{align}
\end{mydef}
\begin{mydef}
Let $U^-$ be the function defined on $E\times \mathbb R^+$ by:
 \begin{align}
 U^{\text{-}}(z^{\text{-}},s) 
 &=\int_E U^{*}(z^+,s)K_{ z^{\text{-}}}(z^+)d\nu_{ z^{\text{-}}}(z^+).
\end{align}
\end{mydef}
 The quantity $U^*(z,s)$ measures the chances of having a system failure before $t_f$ knowing the system is in state $z$ at time $s$, and the quantity $U^-(z^-,s)$ the chances of having a system failure before $t_f$ knowing the system is jumping from the state $z^-$ at time $s$. These quantities play an important role in the latter. 
 
 \subsection{A way to build an optimal importance process}
 \label{subsec:Opti}
 In the importance process, generating the trajectories jump by jump by using \eqref{integtime2} and \eqref{impkernel2} is not restrictive in term of efficiency, as proved by the following theorem: 
 \begin{myth}
For all $z\in E$, $z^-\in\xoverline{E\,}$, and $s\in[0,t_f)$, the jump densities with respect to $\mu_z$ such that \begin{align}
 g^{*}_{T^\prime_k|Z^\prime_{S^\prime_k},\, S^\prime_k= z,s}(u)
 &=\frac{U^{\text{-}}\big(\Phi_{z}(u),s+u\big)}
 {U^{\text{*}}\big( z ,s\big)} f_{T_k| Z_{S_k}=z}(u)\ 
 \label{bestL}
\end{align}   and  the kernels $\mathcal K^{*}_{ { z}^{\text{-}},s} $ having a density with respect to $\nu_{z^-}$ which satisfies  \begin{align}
 K^{*}_{ { z}^{\text{-}},s}(z)&=\frac
 {U^*\big(z,s\big)}
 {U^{\text{-}}\big(z^{\text{-}},s\big)}K_{ z^{\text{-}}}(z)\ \label{bestK}
\end{align} correspond to the jump densities and the transition kernels of an optimal importance process. \\Note, these optimal densities do  integrate  to one as $U^{\text{*}}\big( z ,s\big)=\int_0^{t_z^*} U^{\text{-}}\big(\Phi_{z}(u),s+u\big)  
   f_{T_k| Z_{S_k}=z}(u) du$,  and \linebreak  $U^{-}\big( z ,s\big)=\int_E U^{*}(z^+,s)K_{ z}(z^+)d\nu_{ z}(z^+)$.
\end{myth}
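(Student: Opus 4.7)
The plan is to show that the trajectory density $g^{*}$ induced by the jump-time densities \eqref{bestL} and the kernel densities \eqref{bestK} coincides with the zero-variance density $\indic{\mathscr{D}}(\mathbf{z})f(\mathbf{z})/p$ introduced in Section~\ref{sec1}, so that the variance formula in \eqref{pIS} collapses to zero and the corresponding importance process is optimal. Concretely, I will show that $f(\mathbf{z})/g^{*}(\mathbf{z}) = p$ for $\zeta$-almost every $\mathbf{z} \in \mathscr{D}$.

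A preliminary step is to verify the two normalization identities stated at the end of the theorem, which ensure that $g^{*}_{T'_k|Z'_{S'_k},S'_k=z,s}$ and $K^{*}_{z^{-},s}$ really are probability densities. The second, $U^{-}(z,s) = \int_E U^{*}(z^{+},s)K_z(z^{+})\,d\nu_z(z^{+})$, is just the definition of $U^{-}$. The first, $U^{*}(z,s) = \int_{0}^{t_z^{*}} U^{-}(\Phi_z(u),s+u)\,f_{T_k|Z_{S_k}=z}(u)\,d\mu_z(u)$, follows from conditioning $\mathbb{E}[\indic{\mathscr{D}}(\mathbf{Z})|Z_s=z]$ on the time of the next jump, applying the strong Markov property of the PDMP, and invoking the $M_D$-augmented state space of Subsection~\ref{subsec:Opti} so that $\indic{\mathscr{D}}$ depends on the future only through the current state; the Dirac part of $\mu_z$ handles the boundary case $u=t_z^{*}$.

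The heart of the proof is a telescoping computation. Arguing as in Theorem~\ref{thm:sigmatefinite}, the density of a trajectory under the importance process has the same product form as \eqref{density}, with each jump-time factor replaced by $g^{*}_{T'_k|\ldots}(t_k)$ and each kernel factor replaced by $K^{*}_{z_k^{-},s_k}(z_k)$. Substituting \eqref{bestL} and \eqref{bestK} and dividing by $f(\mathbf{z})$ yields
\begin{equation*}
\frac{g^{*}(\mathbf{z})}{f(\mathbf{z})} = \prod_{k=0}^{n}\frac{U^{-}(\Phi_{z_k}(t_k),s_k+t_k)}{U^{*}(z_k,s_k)}\,\cdot\,\prod_{k=1}^{n}\frac{U^{*}(z_k,s_k)}{U^{-}(z_k^{-},s_k)}.
\end{equation*}
Because $\Phi_{z_k}(t_k) = z_{k+1}^{-}$ and $s_k + t_k = s_{k+1}$ for $k<n$, each factor $U^{-}(\Phi_{z_k}(t_k),s_{k+1})$ of the first product matches a factor $U^{-}(z_{k+1}^{-},s_{k+1})$ in the denominator of the second product, and the $U^{*}(z_k,s_k)$ for $k=1,\ldots,n$ cancel across the two products. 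What survives is $U^{-}(\Phi_{z_n}(t_n),t_f)/U^{*}(z_o,0) = U^{-}(Z_{t_f^{-}},t_f)/p$.

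It remains to show that $U^{-}(Z_{t_f^{-}},t_f) = \indic{\mathscr{D}}(\mathbf{z})$. In the $M_D$-augmented state space, $\mathbf{z}\in\mathscr{D}$ is equivalent to $M_D(Z_{t_f^{-}})=1$, and $K_{z^{-}}$ never resets $M_D$ from $1$ to $0$; moreover $U^{*}(z,t_f)=\indic{M_D(z)=1}$ since at the final time only the past matters. Plugging these into the definition of $U^{-}$ at time $t_f$ reduces the integral to the total $K$-mass of the fibre $\{M_D=1\}$, which is $\indic{M_D(Z_{t_f^{-}})=1}=\indic{\mathscr{D}}(\mathbf{z})$. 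Hence $g^{*}(\mathbf{z})=\indic{\mathscr{D}}(\mathbf{z})f(\mathbf{z})/p$, so the variance in \eqref{pIS} vanishes. The main obstacle is this endpoint treatment: the last interjump time $t_n$ is forced by the Dirac $\delta_{t^{*}_n}$ in \eqref{eq:loitraj} rather than being drawn from $\mu_{z_n}$, and the quantity $U^{-}(\cdot,t_f)$ lies outside the Markov identity obeyed by $U^{*}$ in the interior. The $M_D$-augmentation is precisely the device that lets us handle this limit cleanly and uniformly across spontaneous and boundary jumps.
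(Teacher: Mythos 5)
Your proof is correct and follows essentially the same route as the paper's: write the density of the proposed importance process as the product of the modified jump-time and kernel factors, telescope the ratios of $U^*$ and $U^-$ along the skeleton, and identify the surviving factor $U^{\text{-}}(z_{n+1}^-,t_f)/U^*(z_0,0)$ with $\indic{\mathscr D}(\mathbf z)/p$. The only difference is that you additionally spell out the normalization identities and the endpoint identification $U^{\text{-}}(Z_{t_f^-},t_f)=\indic{\mathscr D}(\mathbf z)$ via the $M_D$-augmented state space, steps the paper's proof uses implicitly.
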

\begin{proof}
 Assume the trajectory $\mathbf z=\Theta\big((z_{_0},t_{_0}),\, ...\, ,(z_{_n},t_n)\big)$ has been simulated with \eqref{bestL} and \eqref{bestK}. Then its density $g$ with respect to $\zeta$ is:\begin{align}
g(\mathbf{z} ) &=\prod_{k=0}^{n} g^{*}_{T^\prime_k|Z^\prime_{S^\prime_k},\, S^\prime_k= z_k,s_k}(t_k) \prod_{k=1}^{n }K^{*}_{ { z_k}^{\text{-}},s_k}(z_k) \nonumber\end{align}
So it verifies:
 \begin{align}
g(\mathbf{z} ) &=\prod_{k=0}^{n}\frac {U^{\text{-}}\big(\Phi_{z_k}(t_k),s_k+t_k\big)}
 {U^{\text{*}}\big( z_k ,s_k\big)}\prod_{k=1}^{n }\frac
 {U^*\big(z_k,s_k\big)}
 {U^{\text{-}}\big(z_k^{\text{-}},s_k\big)} \prod_{k=0}^{n} f_{T_k| Z_{S_k}=z_k}(t_k) \prod_{k=1}^{n } K_{ z_k^-}(z_k) \nonumber\\
 &=\prod_{k=0}^{n}\frac {U^{\text{-}}\big(z_{k+1}^{\text{-}},s_{k+1}\big)}
 {U^{\text{*}}\big( z_k ,s_k\big)}\prod_{k=0}^{n-1 }\frac{U^*\big(z_{k+1},s_{k+1}\big)}
 {U^{\text{-}}\big(z_{k+1}^{\text{-}},s_{k+1}\big)} f(\mathbf z) \nonumber\\
 &=\frac{U^{\text{-}}\big(z_{n+1}^{\text{-}},s_{n+1}\big)}
 {U^*\big(z_{0},s_{0}\big)} f(\mathbf z) =\frac{\indic{\mathscr D}(\mathbf z)f(\mathbf z)}{\mathbb E_{z_0}\big[\indic{\mathscr D}(\mathbf z)\big]}=g^{*}(\mathbf{z}),\nonumber\end{align}
 where $g^{*}(\mathbf{z})$ is the density for an estimator with zero variance. 
\end{proof}
Equations \eqref{bestL} and \eqref{bestK} serve as a guide to build an importance process: one should try to specify densities as close as possible to these equations so as to get an estimator variance as close as possible to the minimal zero variance.

\subsection{Observations on the optimal process}
As we do not know the explicit forms of $U^*$ and $U^-$, the construction of an importance process close to the optimal one is delicate. Nonetheless, the equations \eqref{bestL} and \eqref{bestK} can give us information on how to build an importance process in practice. In this Section, we investigate the properties of the optimal importance process and of the function $U^*$ with the aim of building a good and practical importance process.

For instance, we can get the expression of the jump rate of the optimal process. For the time of the $k$-th jump, by definition of the jump rate and knowing that $(Z^\prime_{S^\prime_k},\, S_k)= (z,s)$, we get :
\begin{align}
 \lambda^{*}_{ z,s}(u)&=\frac { g^{*}_{T^\prime_k|Z^\prime_{S^\prime_k},\, S^\prime_k= z,s}(u)}{1-\int_0^u g^{*}_{T^\prime_k|Z^\prime_{S^\prime_k},\, S^\prime_k= z,s}(v)dv}, \nonumber \\
 \Leftrightarrow \quad \lambda^{*}_{ z,s}(u)&=\frac {U^{\text{-}}\big(\Phi_{z}(u),s+u\big) \Big(\lambda_{ z}(u) \Big)^{\mathbbm{1}_{ u < t^*_{z} }}\exp\Big[-\Lambda_{ z}(u)\Big]}{\int_{(u,t^*_{z}]}U^{\text{-}}\big(\Phi_{z}(v),s+v\big)\Big(\lambda_{ z}(v) \Big)^{\mathbbm{1}_{v < t^*_{z} }}\exp\Big[-\Lambda_{ z}(v)\Big]d\mu_{z}(v)}\ .\label{LB1}
\end{align} Using some properties of $U^*$ and \eqref{LB1} we can prove the following theorem:
\begin{myth}
\label{th:optijumprate}
The jump rate of the optimal importance process defined by the densities \eqref{bestL} and \eqref{bestK} verifies:
 \begin{equation}
\lambda^{*}_{ z,s}(u)
 =\frac {U^{\text{-}}\big(\Phi_{z}(u),s+u\big) }{U^{*}\big(\Phi_{z}(u),s+u\big)} \lambda_{ z}(u) \ . \label{LB2}
\end{equation}\end{myth}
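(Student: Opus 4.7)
The plan is to start from equation~\eqref{LB1} and show that the denominator appearing there is nothing but $U^{*}\big(\Phi_{z}(u),s+u\big)\exp[-\Lambda_{z}(u)]$. Once this identification is made, the factors $\exp[-\Lambda_{z}(u)]$ cancel between numerator and denominator, and the theorem follows immediately.

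The key input is a ``first-step'' formula for $U^{*}$ obtained by conditioning on the first jump after time $s$. Using the strong Markov property of the PDMP (the embedded chain structure recalled in Section~\ref{sec:zeta}), conditioning on $T$ being equal to $v$ gives $Z_{(s+v)^-}=\Phi_{z}(v)$, so that by the tower property
\begin{align*}
U^{*}(z,s) = \int_{(0,t^*_z]} U^{-}\!\big(\Phi_{z}(v),s+v\big)\,\big(\lambda_{z}(v)\big)^{\mathbbm{1}_{v<t^*_z}}\exp\!\big[-\Lambda_{z}(v)\big]\,d\mu_{z}(v).
\end{align*}
This is exactly the integral identity stated in the ``Note'' following the theorem statement (the Dirac part of $\mu_{z}$ handles the deterministic jump at the boundary).

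Next, I would apply the same identity at the shifted starting state $\big(\Phi_{z}(u),s+u\big)$. Two properties of the flow are needed here: the semigroup identity $\Phi_{\Phi_{z}(u)}(w)=\Phi_{z}(u+w)$, which in particular gives $t^{*}_{\Phi_{z}(u)}=t^{*}_{z}-u$; and the corresponding identity for the cumulated intensity, $\Lambda_{\Phi_{z}(u)}(w)=\Lambda_{z}(u+w)-\Lambda_{z}(u)$. Performing the change of variable $v=u+w$ transforms $d\mu_{\Phi_{z}(u)}(w)$ on $(0,t^{*}_{z}-u]$ into $d\mu_{z}(v)$ on $(u,t^{*}_{z}]$ (the Lebesgue part by translation invariance, the Dirac part because $t^{*}_{\Phi_{z}(u)}+u=t^{*}_{z}$), and produces the factor $\exp[\Lambda_{z}(u)]$ from the $\Lambda$-shift. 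After rearrangement this yields
\begin{align*}
U^{*}\!\big(\Phi_{z}(u),s+u\big)\,e^{-\Lambda_{z}(u)} = \int_{(u,t^*_z]} U^{-}\!\big(\Phi_{z}(v),s+v\big)\,\big(\lambda_{z}(v)\big)^{\mathbbm{1}_{v<t^*_z}}\exp\!\big[-\Lambda_{z}(v)\big]\,d\mu_{z}(v),
\end{align*}
which is precisely the denominator in \eqref{LB1}. Substituting back gives \eqref{LB2}.

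The main obstacle I expect is the careful bookkeeping of the hybrid reference measure $\mu_{z}$ through the change of variable $v=u+w$: one needs to check that its Lebesgue part translates correctly and, crucially, that the Dirac mass at $t^{*}_{\Phi_{z}(u)}$ maps onto the Dirac mass of $\mu_{z}$ at $t^{*}_{z}$. Beyond this point, the argument is just algebraic manipulation of the identity above with \eqref{LB1}.
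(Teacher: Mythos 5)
Your proposal is correct and follows essentially the same route as the paper's proof: both identify the denominator of \eqref{LB1} with $U^{*}\big(\Phi_{z}(u),s+u\big)\exp\big[-\Lambda_{z}(u)\big]$ via the first-jump conditioning identity for $U^{*}$ combined with the flow semigroup property and the additivity $\Lambda_{\Phi_{z}(u)}(w)=\Lambda_{z}(u+w)-\Lambda_{z}(u)$. The paper simply writes the first-step formula directly at the shifted state $\Phi_{z}(t)$ (in Appendix B.1) instead of establishing it at $(z,s)$ and changing variables, but the computation is the same.
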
 
The proof is provided in appendix \ref{appendix-sec2}. 

Note that the expression \eqref{LB2} can be easily interpreted. $\lambda^{*}_{ z,s}(u)$ corresponds to the jump rate at the state $Z_{s+u}=\Phi_{z}(u)$. $U^{*}\big(\Phi_{z}(u),s+u\big)$ is the probability of generating a failing trajectory if $Z_{s+u}=\Phi_{z}(u)$ and if there is no jump at time $s+u$. $U^-\big(\Phi_{z}(u),s+u\big)$ is the probability of generating a failing trajectory if there is a jump at time $s+u$ and if the departure state is $Z_{s+u^-}=\Phi_{z}(u)$. So the ratio $\dfrac {U^{\text{-}}\big(\Phi_{z}(u),s+u\big) }{U^{*}\big(\Phi_{z}(u),s+u\big)}$ is the factor multiplying the probability of generating a failing trajectory when there is a jump at time $s + u$.
The expression indicates that, in order to reach the zero variance, one should increase the original jump rate in the same proportion as a jump would increase the probability of getting a failing trajectory.

The Theorem \ref{th:optijumprate} is noteworthy, because in practice the law of the jump time is specified through the jump rate. Thus it can be used to specify the laws of the jump times of an importance process, as we will do in Section \ref{subsec:ISparam}.

 Also note that with equations \eqref{LB2} and \eqref{bestK} indicate that, once the region $D$ has been reached, the optimal process does not differ from the original process. Indeed if $\tau_D$ is the reaching time of the critical region $D$, then for $s\geq \tau_D$ we have for all states $ z$ and $z^- $, $U^*(z,s)=U^-(z^-,s)=1$ and so for $s\geq \tau_D$ we get $K^*_{z^-,s}=K_{z^-}$, and for $s+u\geq \tau_D$ we get $\lambda^*_{z,s}(u)=\lambda_{z}(u)$.

 As it plays an important role in the expression of the optimal process, we look for more information about the function $U^*$. We first notice that: if $\tau$ is a stopping time such that $t_f>\tau>s$, then \begin{align}U^*(z,s)&=\mathbb E\big[\indic{\mathscr D}(\mathbf Z)\big| Z_s=z\big]\nonumber\\
 &=\mathbb E\Big[ \mathbb E\big[\indic{\mathscr D}(\mathbf Z)\big| Z_{\tau} \big]\Big| Z_s=z\Big]\nonumber\\
 \mbox{and so}\quad U^*(z,s)&=\mathbb E\big[U^*(Z_{\tau},\tau)\big| Z_s=z\big].\label{eq:Ustop} \end{align}
 Using the equation \eqref{eq:Ustop} we can show the two following properties:
 \begin{myth}
 $U^*$ is kernel invariant on boundaries:
 \begin{equation} 
 \forall z\in E,\qquad U^{\text{-}}\big(\Phi_{z}(t_z^{*}),s+t_z^{*}\big)= \lim_{ t\nearrow t_z^{*} } U^{*}\big(\Phi_{z}(t),s+t\big)\ .
\label{closureCondition}
\end{equation}
\label{thm:closureCondition}
 \end{myth}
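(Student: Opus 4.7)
The plan is to apply the tower identity \eqref{eq:Ustop} with the stopping time equal to the next jump time after $s+t$, and then let $t \nearrow t_z^*$. The intuition is that as $t$ approaches $t_z^*$ the time remaining before the inevitable boundary jump, namely $t_z^* - t$, shrinks to zero, so the probability of a spontaneous jump in that window vanishes and the first jump is almost surely the boundary jump. Once that happens, the arrival state is drawn from $\mathcal K_{\Phi_z(t_z^*)}$, which is exactly what the definition of $U^-$ averages over.

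Concretely, I would fix $z \in E$ and $s \in [0,t_f)$, set $z_t = \Phi_z(t)$ for $t<t_z^*$, and note that $t^*_{z_t} = t_z^* - t$ by definition of the flow and the hitting time. Let $T_t$ be the time to the next jump starting from state $z_t$ at time $s+t$. Applying \eqref{eq:Ustop} with the stopping time $\tau = s+t+T_t$ gives
\begin{equation*}
U^*(z_t,s+t) = \mathbb{E}\big[U^*(Z_{s+t+T_t},s+t+T_t)\,\big|\,Z_{s+t}=z_t\big].
\end{equation*}
Using \eqref{Tlaw}, split this expectation according to whether $T_t < t^*_{z_t}$ (a spontaneous jump) or $T_t = t^*_{z_t}$ (the boundary jump). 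On the event $\{T_t = t^*_{z_t}\}$, which occurs with probability $\exp[-\Lambda_{z_t}(t_z^*-t)]$, the departure state of the jump is $\Phi_{z_t}(t^*_{z_t}) = \Phi_z(t_z^*)$ and the arrival is drawn from $\mathcal K_{\Phi_z(t_z^*)}$. By the definition of $U^-$, the corresponding contribution is precisely $\exp[-\Lambda_{z_t}(t_z^*-t)]\cdot U^-(\Phi_z(t_z^*),s+t_z^*)$. The complementary term (spontaneous jump) is bounded in absolute value by $1-\exp[-\Lambda_{z_t}(t_z^*-t)]$ since $U^*\le 1$.

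To conclude, I would let $t \nearrow t_z^*$. Under the standard PDMP assumption that $\lambda$ is locally bounded along flow lines (which is needed even to make sense of the model), we have $\Lambda_{z_t}(t_z^*-t) = \int_t^{t_z^*}\lambda(\Phi_z(u))\,du \to 0$. Hence $\exp[-\Lambda_{z_t}(t_z^*-t)] \to 1$, the spontaneous-jump remainder vanishes, and the boundary-jump term converges to $U^-(\Phi_z(t_z^*),s+t_z^*)$, which is exactly \eqref{closureCondition}.

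The only real obstacle is step three: justifying that the spontaneous-jump probability vanishes as $t \nearrow t_z^*$. This is immediate under local integrability of $\lambda$ along the flow, and I would simply cite this as a standing regularity hypothesis on the model. The rest of the argument is a clean application of the tower property and the definition of $U^-$; no continuity of $U^*$ at the boundary is required, because the arrival-side kernel integral is already evaluated at the fixed boundary point $\Phi_z(t_z^*)$ for every $t$.
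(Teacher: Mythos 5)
Your proposal is correct and follows essentially the same route as the paper's proof: condition on the next jump after time $s+t$, split according to whether it is spontaneous or the boundary jump, observe that the boundary term equals $\exp[-\Lambda_{\Phi_z(t)}(t_z^*-t)]\,U^-(\Phi_z(t_z^*),s+t_z^*)$ with the kernel integral already evaluated at the fixed boundary point, and let $t\nearrow t_z^*$ so that the spontaneous contribution vanishes. The only cosmetic difference is that you bound the remainder explicitly by $1-\exp[-\Lambda_{\Phi_z(t)}(t_z^*-t)]$ where the paper writes $o(1)$, and you make explicit the local integrability of $\lambda$ along the flow that the paper leaves implicit.
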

 \begin{myth}
If $u\to U^{\text{-}}\big(\Phi_{z}(u),s+u\big) $ and $u\to \lambda_{ z}(u)$ are continuous almost everywhere on $[0,t^*_z)$, then almost everywhere $U^*$ is differentiable along the flow, with:
\begin{equation}
\frac{\partial U^{*}\big(\Phi_{z}(v),s+v\big) }{\partial v}
= U^{*}\big(\Phi_{z}(v),s+v\big) \lambda_{ z}(v) -U^{\text{-}}\big(\Phi_{z}(v),s+v\big) \lambda_{ z}(v)
\label{Uderiv}
\end{equation}
\label{thm:Uderiv}
 \end{myth}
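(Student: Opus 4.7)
The plan is to derive an integral equation for $v \mapsto U^*(\Phi_z(v), s+v)$ along the flow via the strong Markov property, then differentiate.

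First I would fix $z \in E$ and $s \in [0, t_f)$ and work at a point $v \in [0, t_z^*)$ strictly inside the flow interval. Writing $V(v) := U^*(\Phi_z(v), s+v)$ and $W(v) := U^-(\Phi_z(v), s+v)$, I would use the stopping-time identity \eqref{eq:Ustop} applied to the stopping time $\tau = (s+v+h) \wedge (S_1^{(v)})$, where $S_1^{(v)}$ is the first jump time after $s+v$ and $h > 0$ is small enough that $v+h < t_z^*$. Conditional on starting at $\Phi_z(v)$ at time $s+v$: with probability $\exp[-(\Lambda_z(v+h) - \Lambda_z(v))]$ no jump occurs in $[s+v, s+v+h]$ and the process is in state $\Phi_z(v+h)$ at time $s+v+h$; otherwise the first jump occurs at some time $s+v+t$ with $t \in (0,h]$, in which case the post-jump conditional expectation is $U^-(\Phi_z(v+t), s+v+t) = W(v+t)$. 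Using the conditional density of $T$ from \eqref{Tdensity} this gives
\begin{equation*}
V(v) = e^{-(\Lambda_z(v+h) - \Lambda_z(v))} V(v+h) + \int_v^{v+h} \lambda_z(t)\, e^{-(\Lambda_z(t) - \Lambda_z(v))} W(t)\, dt.
\end{equation*}

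Second, I would clear the initial exponential factor by multiplying through by $e^{-\Lambda_z(v)}$. Setting $\tilde V(v) := e^{-\Lambda_z(v)} V(v)$, the equation collapses to
\begin{equation*}
\tilde V(v+h) - \tilde V(v) = - \int_v^{v+h} \lambda_z(t)\, e^{-\Lambda_z(t)} W(t)\, dt.
\end{equation*}
This displays $\tilde V$ as an integral of a locally bounded function, so it is absolutely continuous on $[0, t_z^*)$. At every Lebesgue point of the integrand, and in particular almost everywhere under the continuity hypotheses on $u \mapsto W(u)$ and $u \mapsto \lambda_z(u)$, the fundamental theorem of calculus yields
\begin{equation*}
\tilde V'(v) = -\lambda_z(v)\, e^{-\Lambda_z(v)} W(v).
\end{equation*}

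Finally I would unpack $\tilde V$. Since $\tilde V'(v) = e^{-\Lambda_z(v)} V'(v) - \lambda_z(v) e^{-\Lambda_z(v)} V(v)$ (valid almost everywhere, with $V$ itself absolutely continuous by the same display), equating the two expressions for $\tilde V'(v)$ and multiplying by $e^{\Lambda_z(v)}$ gives
\begin{equation*}
V'(v) = \lambda_z(v) V(v) - \lambda_z(v) W(v),
\end{equation*}
which is exactly the desired formula \eqref{Uderiv}.

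The step I expect to be most delicate is the first one: justifying the decomposition rigorously requires applying the strong Markov property of the PDMP at time $s+v$ and using that, conditional on $Z_{s+v} = \Phi_z(v)$ with no prior jump, the future law of $\mathbf Z$ on $[s+v, t_f)$ is the law of a PDMP started at $\Phi_z(v)$ at time $s+v$, with first-jump density given by \eqref{Tdensity} and post-jump kernel controlled by $U^-$ via the tower property together with the definition of $U^-$. The differentiation step itself is then a routine application of Lebesgue's differentiation theorem under the stated continuity assumptions, and no issues arise from the boundary since we only claim the identity on $[0, t_z^*)$.
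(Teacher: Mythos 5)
Your proof is correct, and it rests on the same probabilistic input as the paper's argument --- conditioning on whether the first jump after time $s+v$ occurs within the window $(s+v,\,s+v+h]$ --- but the analytic finish is genuinely different and, in fact, cleaner. The paper expands this first-jump decomposition infinitesimally (a Taylor expansion of $\exp[-\Lambda_z(h)]$ and of the integral term), obtains a right-derivative at $0$, then argues separately that a left-derivative exists and coincides by ``applying the same reasoning in state $\Phi_z(-h)$'', and finally translates along the flow to reach a general $v$. You instead record the decomposition as an exact integral identity, observe that $\tilde V(v)=e^{-\Lambda_z(v)}U^{*}\big(\Phi_z(v),s+v\big)$ is an indefinite integral of the integrable function $-\lambda_z e^{-\Lambda_z}W$ and hence absolutely continuous, and invoke the fundamental theorem of calculus at the a.e.\ continuity points of the integrand. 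This yields a genuine two-sided derivative in one stroke, dispenses with the somewhat delicate backward-flow step (which implicitly requires the flow to be defined for negative times), and makes the role of the a.e.-continuity hypotheses transparent; the integral identity you start from is moreover the same one the paper establishes over the full interval $[t,t_z^*]$ in its proofs of Theorems 3 and 4, so your route unifies the appendix computations. Two small points should be made explicit: $h$ must be chosen with $v+h<t_z^*$ so that the Dirac mass of $\mu_{\Phi_z(v)}$ at $t^*_{\Phi_z(v)}$ does not enter the first-jump density you integrate against, and $e^{\Lambda_z}$ is itself absolutely continuous with a.e.\ derivative $\lambda_z e^{\Lambda_z}$, which is what licenses the product rule used to unpack $\tilde V$; both are immediate.
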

 The theorems \ref{thm:closureCondition} and \ref{thm:Uderiv} can in fact be seen as foreward Kolmogorov equations on $U^*$. A complete proof for these two properties is in the appendix \ref{appendix-sec2}. \\
 \subsection{A parametric importance process}
 \label{subsec:ISparam}
 In order to find an importance process that gives a good variance reduction, we usually restrict the search within a parametric family of importance densities. Then we rely on optimization routines to find the parameters yielding the best variance reduction. Here, we propose to use a parametric approximation of $U^*(z,s)$, and then combine it with equations \eqref{LB2} and \eqref{bestK} to get the form of the importance kernels and of the importance intensities. 
If we denote $U_\alpha(z,s)$ our approximation of $U^*(z,s)$, where  the parameter $\alpha$ belongs to the set $A_{param}$, and we set $U_\alpha^{\text{-}}\big(z^{\text{-}},s\big)=\int_E U_\alpha(w,s )K_{ z^{\text{-}}}(w)d\nu_{ z^{\text{-}}}(w)$, then the corresponding importance intensities and kernels are given by :
\begin{align}
 \lambda^{\prime}_{ z,s}(u)
 &=\frac {U^{\text{-}}_\alpha\big(\Phi_{z}(u),s+u\big) }{U_\alpha\big(\Phi_{z}(u),s+u\big)} \lambda_{ z}(u) \ ,
 \label{impL2}
\\[10pt]
 K^{\prime}_{ { z}^{\text{-}},s}(z^+)&=\frac
 {U_\alpha\big(z^+,s\big)}
 {U^{\text{-}}_\alpha\big(z^{\text{-}},s\big)}K_{ z^{\text{-}}}(z^+)\label{impK2}\ .
\end{align} \\
With these settings and notations, condition (C3) can be expressed as: \begin{align*}
 U^*( z_k,s_k)>0 , \mbox{ and } K_{z_k^-}(z_k)>0\ &\Rightarrow U_\alpha( z_k,s_k)>0 \\
 U^*( z_k,s_k+t)>0 ,\mbox{ and }\lambda_{z_k}(t)>0\ &\Rightarrow U_\alpha( z_k,s_k+t)>0 ,
 \end{align*} 
 for any $z_k\in E$, and $z_k^-\in\xoverline{ E\,}$, and $t\in (0,t*_{z_k}]$. It is therefore satisfied if we take $U_\alpha $ positive everywhere for instance.

 Here we switch the problem of setting a density $g$ close to $g^*$ by finding $\lambda^\prime$ and $K^\prime$, to the problem of finding a surface $U_\alpha$ on $E\times \mathbb R^+$ close to the surface $U^*$. 

Note that this way of building a parametric family of importance processes can be applied to any kind of systems, though the shape of $U_\alpha$ may have to be adapted from case to case. Indeed, we expect the shape of $U^*$ to depend on the configuration of the system and so does the shape of the $U_\alpha$'s.

We could also have plugged the approximations $U_\alpha$ and $U_\alpha^-$ into \eqref{bestL}, rather than into \eqref{LB2}, but the option we have chosen is in fact more convenient and computationally more efficient. With Equation \eqref{LB2}, we pass through the intensity, so the density of the $T_k^\prime$'s automatically integrates to 1. Conversely if we pass through equation \eqref{bestL}, we have to renormalize the density so it integrates to 1 before simulating a realization of the $T_k^\prime$. As this renormalization requires to compute an integral, it is less advantageous.

 \subsection{Remarks on the parameter optimization}
 As mentioned in the introduction, we propose to use the cross-entropy method presented in \cite{tutorialCE} to select the parameters of the importance density as it was done in \cite{zuliani2012rare}. In the case of PDMP, it is hard to use the adaptive cross-entropy method presented in \cite{tutorialCE}: the adaptive cross-entropy requires to have a function $\mathcal O:\mathbf E\to \mathbb R$ that orders the trajectories which is hard to specify judiciously. This function must order the states in such a way that there exists a threshold  $c$ for which \begin{equation}
 \mathbb P(\mathbf Z \in \mathscr D)=\mathbb P(\mathcal O(\mathbf Z)>c),\end{equation} and there exists a sequence of thresholds $ c_0\leq c_1\leq \dots \leq c_k=c$ so that \begin{equation}  \mathbb P(\mathbf Z \in \mathscr D_i)=\mathbb P(\mathcal O(\mathbf Z)>c_i),\end{equation} where $\mathscr D_0\subseteq\mathscr D_1\subseteq\dots\subseteq\mathscr D_i\subseteq \mathscr D_{i+1}\subseteq\dots\subseteq\mathscr D_{k}=\mathscr D$. In order to run the Cross-Entropy algorithm with relatively low sample sizes at each step (from 100 to 1000), it is good to set the function $\mathscr O$ so that   $20\leq\frac{\mathbb P(\mathcal O(\mathbf Z)>c_i)}{\mathbb P(\mathcal O(\mathbf Z)>c_{i+1})}\leq 100$ \cite{tutorialCE}. The issue is that  we find it hard to specify such a function  with PDMP.  For this reason we used a simplified version of the CE method considering only one threshold $c$.
 
The CE algorithm also requires to minimize an approximation of the Kullback-Leiber divergence  $D(g_\alpha,g^{*})$. We simulate a sample used to compute  many approximations of  $D(g_\alpha,g^{*})$. The  optimization routine uses the sample to compute some approximations of  $D(g_\alpha,g^{*})$ with different values of $\alpha$.  When the sample contains too many trajectories in $\mathscr D$, this approximations can be computationally heavy. Conversely, when the sample does not contain enough   trajectories in $\mathscr D$  the approximations are not accurate enough. So we choose to increase the size of the sample gradually until  it contains $n_{CE}$ trajectories in $\mathscr D$, $n_{CE}$ being a number fixed by the user. This way the objective function to minimize and its gradient are both a sum over $n_{CE}$ terms, and thus they are not too heavy to compute. The CE algorithm we used is presented in Table   \ref{tab:CE0}.
 \begin{table}[h]\fbox{
 \begin{algorithm}[H]
 \textbf{Initialization: } choose $\alpha_0\in A_{param}$ and $n_{CE}\in\mathbb N^*$ and set $ t=0$, and $ \varepsilon>0$\\
 \While{ $||\alpha_{t}-\alpha_{t+1}||<\varepsilon $}{
 Set $k=1$, and generate   $  \mathbf Z^\prime_1 \sim g_{\alpha_t}$\\
 \While{ $\sum_{i=1}^k \indic{\mathbf Z^\prime_i\in\mathscr D}<n_{CE}$}{Generate $\mathbf Z^\prime_{k+1} \sim g_{\alpha_t}$ \\ $k:=k+1$}
 $N=k-1$\\
 Compute $\alpha_{t+1}=\underset{\alpha\in A_{param}}{\mbox{argmin }}\frac 1 N \sum_{i=1}^{N} \indic{\mathbf Z^\prime_i\in\mathscr D}\frac{f(\mathbf Z^\prime_i)}{g_{\alpha_t}(\mathbf Z^\prime_i)} log\big(g_{\alpha_t}(\mathbf Z^\prime_i)\big) $ \\
 $t:=t+1$ }
 \textbf{End: }Estimate $p$ using the importance density $g_{\alpha_{t-1}}$ 
 	\end{algorithm}}
\caption{CE algorithm}
\label{tab:CE0}
\end{table}

 However, to our knowledge, there is no guarantee that the minimization routine used in the cross entropy method converges to a global optimum. Therefore, to avoid falling in a local optimum, one should run several times the cross entropy method with different initial values for the vector of parameters. Note that the parametrization must be chosen carefully: indeed the family of the importance densities must contain densities that are close to the zero-variance density $g^*(\mathbf z)=\frac{\indic{ \mathscr D }(\mathbf z)f(\mathbf z)}{p}$ to obtain a good variance reduction, otherwise we could even obtain a variance increase. In order to avoid   a variance increase, the parametric family should contain the original density $f$. Indeed, if we specify  in the parametric family that for say $\alpha=0$ we have $g_0=f$ then the parameter optimization  should not select  a  parameter worse than $\alpha=0$  and in the worst scenario the variance remains unchanged. This is why we advise that the family of the $U_\alpha$ functions includes a constant function, so that the original process with jump rate $\lambda_z$ and transition kernel $K_z$ is included in the admissible importance processes. \\

The initial vector of parameters $\alpha_0$ has a big influence on the convergence of the method. Ideally, it  should be chosen to simulate $n_{CE}$ trajectories in $\mathscr D$ relatively fast, but, in order to avoid an over-biasing situation with a wrong approximation of the Kullback-Leiber divergence at the first step, we recommend to choose $\alpha_0$ so that $g_{\alpha_0}$ is as close as possible from $f$. Testing several values of $\alpha_0$ is therefore necessary, to get a sense of what is a good $\alpha_0$.

\section{Simulation study on a test case}
\label{sec:Ex}

In this Section we present how we build an importance process for the heated room system presented in Section \ref{subsec:sys3comp}.\\ 

\subsection{A parametric family of importance processes}

 In the heated-room system, the three heaters are identical and are in parallel redundancy, so we expect the probability $U^*(z,s)=\mathbb E\big[\indic{\mathscr D}(\mathbf z)| Z_{s}=z\big]$ to increase with the number of failed heaters in the state $z$. Therefore, noting $b(z)$ the number of failed heaters in state $z$, we start by setting \begin{equation}
 U_\alpha(z,s)= H_\alpha\big( b(z)\big)\,Q(x,s)
 \end{equation} where $Q$ is a function of position and time, and $H_\alpha $ is a function on integers. We set $H_\alpha( 0)=1$. As we want $U_\alpha(z,s)$ to increase with $b(z)$, $H_\alpha $ has to be an increasing function.
 
 If $T$ denotes the time until the next jump after a time $s$, using \eqref{eq:Ustop} with $\tau=s+T$ we get: \begin{align} \quad U^*(z,s)&=\mathbb E\big[U^*(Z_{s+T},s+T)\big| Z_s=z\big]. \end{align} As the repair rates are larger than the failure rates by one order of magnitude in practice, when there is at least one failed heater, the probability of arriving in a more degraded state $Z_T$ is much lower than the probability of having a repair. This last remark can actually be applied to any reliable industrial system (see for instance \cite{Pycatshoo2}). Ideally we would like $U_\alpha$ to mimic the property of $ U^*$ so we would like to have 
 \begin{equation}
 U_\alpha (z,s)=\mathbb E\big[U_\alpha(Z_T,s+T)\big| Z_s=z\big] \label{mimic}
 \end{equation}
 which can be reformulated as :
 \begin{equation}
 H_\alpha \big(b(z)\big)=\sum_{m^+\in\mathbb{M}}H_\alpha\big(b(x,m^+)\big)\int_{(0,t_z^*]}\mbox{\hspace{-1.4em}}K_{\Phi_z(u)}\big((\phi_x^m(u),m^+)\big)w_z(u)d\mu_z(u) \label{mimic2}
 \end{equation}
 where $w_z(u)=\dfrac{Q (\phi_x^m(u),s+u)}{Q (x,s )}\exp\big[\text{-}\Lambda_z(u)\big]$.
 As a repair is much more likely than failure, if the transition from state $(\phi_x^m(u),m)$ to the state $(\phi_x^m(u),m^+)$ indexes a repair $K_{\Phi_z(u)}\big((\phi_x^m(u),m^+)\big)$ is larger than if it had indexed a failure. So, \eqref{mimic2} implies that, when $b(z)>1$, the value of $H_\alpha( b(z))$ is closer from $H_\alpha( b(z)-1)$ than from $H_\alpha( b(z)+1)$. As $H_\alpha$ was supposed increasing, it must be convex. So
 we propose that $H_\alpha( b(z) )=\exp\big[\alpha_1{b(z)}^2]$, with $\alpha_1>0$.    If, from a $Z_s=\Phi_z(u)$, the transition $j$ corresponds to a failure    then we have:
 \begin{equation}
 {\lambda^\prime}^{j}_{z,s}(u)= \lambda^{j} _{ z }(u)\exp\big[\alpha_1\big(2b(z)+1\big) ]\ , \label{lfimp}
 \end{equation}
 and if it corresponds to a repair then we have:
 \begin{equation}
   {\lambda^\prime}^{j}_{z,s} (u)= \lambda^{j}_{ z }(u)\exp\big[-\alpha_1\big(2b(z)-1\big)]\ . \label{lrimp}
 \end{equation}
 The jump rate satisfies:
 \begin{equation}
 {\lambda }^{\prime}_{ z,s}(u)= \sum_{i\in J}   {\lambda^\prime}^{i}_{z,s} (u)  \qquad \mbox{ and } \forall \qquad K^{\prime}_{ { \Phi_z(u)}^{\text{-}},s}(z^+)=\frac{ {\lambda^\prime}^{z^+}_{ z,s}(u)}{ \int_E {\lambda^\prime}^{z^+}_{z,s} (u)d\nu_z(z^+)}\ .
 \end{equation}
 We set the jump kernel such that its density satisfies for $u\in [0,t_z^*)$ :
  \begin{equation}
 K^\prime_{z^{\text{-}}}(z^+ )=\frac{K_{z^{\text{-}}}(z^+ )\exp \big[-\alpha_1\,b(z^+)^2\big]}{\int_E K_{z^{\text{-}}}(z)\exp\big[-\alpha_1\, b(z )^2\big]d\nu_{z^{\text{-}}}(z ) }.
 \end{equation}
Note that plugging $U_\alpha$ into the equations \eqref{bestL} and \eqref{bestK} imposes some kind of symmetry in the biasing of failure and repair rates. It is especially visible in equations \eqref{lfimp} and \eqref{lrimp}: On the one hand the failure rate associated to the transition from a state $z^-$ to $z^+$ is multiplied by a factor $\exp\big[\alpha_1\big(2b(z^-)+1\big) ]$, and on the other hand the repair rate corresponding to the reversed transition (from state $z^+$ to state $z^-$) is divided by a factor $\exp\big[\alpha_1\big(2b(z^-)-1\big) ]$. The equations \eqref{bestL} and \eqref{bestK} not only imply that the failures should be enhanced and the repairs inhibited, but it also states that the magnitudes of the distortion should be reciprocal.

 The square in $H_\alpha $'s formula was introduced to strengthen the failure rates when the number of broken heaters gets larger. The idea was to shorten the duration where several heaters are simultaneously failed in the simulated trajectories. Indeed, as repair is faster than failure, the shorter are the durations with a failed heater the more likely is the trajectory. Increasing the failure rates with the number of broken heaters is a mean to simulate more trajectories in $\mathscr D$ while maintaining the natural proportion between the likelihoods of the trajectories, which should decrease the variance.
 
 As the failure on demand was likely to play an important role in the system failure, we choose to separate it from spontaneous failure in our parametrisation setting $U_\alpha((x_{min},m),s)=\exp[-\alpha_2 b(z)^2]H_\alpha(x_{min},s)$. This allows to better fit $U_\alpha$ to $U^*$. Under this assumption, the equation \eqref{impK2} implies that for $z^-=(x_{min},m)$, the importance kernel takes this form:
\begin{equation}
 K^\prime_{z^{\text{-}}}(z^+ )=\frac{K_{z^{\text{-}}}(z^+ )\exp \big[-\alpha_2\,b(z^+)^2\big]}{\int_E K_{z^{\text{-}}}(z)\exp\big[-\alpha_2\, b(z )^2\big]d\nu_{z^{\text{-}}}(z ) }.
 \label{impK}
\end{equation}

\subsection{Results}
The Monte-Carlo simulations have been carried out using the Python library PyCATSHOO. (The flow functions $\phi^m_x$ were computed using a Runge-Kutta method of order 4 with a   discretization  step of 0.01. This discretization step is small enough so that reducing it further does not change the estimations.) As the Cross-Entropy method was not yet implemented in PyCATSHOO, we have used a specific Python code for the Cross-Entropy and the importance sampling methods. The system parameters used in the simulation were the following ones: $x_{min}=0.5$, $x_{max}=5.5$, $x_e=-1.5$, $\beta_1=0.1$, $\beta_2=5$, $t_f=100$. Trajectories were all initiated in the state $z_0=\big(7.5,\ (OFF,OFF,OFF)\big)$. The probability of having a system failure before $t_f$ was estimated to $p=1.29\times10^{-5}$ with an intensive Monte-Carlo estimation based on $10^8$ runs.
\renewcommand{\arraystretch}{1.5}
 \begin{table}[ht]\centering{ 
\begin{tabular}{c|c|c|c|c|c|c|@{}}
 \cline{2-7}
 & $N_{sim}$ & $\hat p$ & $\hat\sigma^2/N_{sim}$ & $ \widehat{\mbox{IC}}\times 10^5$ & $t_{sim}$ & $\widehat{eff}$\\
 \hline
\multicolumn{1}{|c|}{ \multirow{4}{*}{IS} } 
 & $10^3$ &$\, 1.28 \times10^{-5}$ & $\, 4.37\times10^{-13}$ & $\,[ 1.15 , 1.41 ] \,$& $0.073$ s & $3.1\times10^{10}$\\
\multicolumn{1}{|c|}{}& $10^4$ &$\, 1.273\times10^{-5}$ & $\, 5.07\times10^{-14}$ & $\,[ 1.228, 1.317 ] \,$& $0.073$ s & $2.7\times10^{10}$\\
\multicolumn{1}{|c|}{}& $10^5$ &$\, 1.289\times10^{-5}$ & $\, 5.01\times10^{-15}$ & $\,[ 1.275, 1.303 ] \,$& $0.077$ s & $2.6\times10^{10}$\\
\multicolumn{1}{|c|}{}& $10^6$ &$\, 1.288\times10^{-5}$ & $\, 5.05\times10^{-16}$ & $\,[ 1.283, 1.292 ] \,$& $0.079$ s & $2.5\times10^{10}$\\
 \hline
\multicolumn{1}{|c|}{\multirow{2}{*}{MC}} & $10^6$ & $\, 0.4 \times 10^{-5}$ & $\, 4.00\times10^{-12}$ & $\,[ 0.01, 0.79 ] \,$ & $0.022$ s & no convergence \\ 
\multicolumn{1}{|c|}{} & $10^7$ & $\, 1.3 \times 10^{-5}$ & $\, 1.28\times 10^{-12}$ & $\,[ 1.07,1.51 ] \,$ & $0.022$ s & $3.5 \times 10^6$\\ 
\hline
\end{tabular}}
\caption{Comparison between Monte-Carlo and importance sampling estimations}
\label{res}
 \end{table} 

 The values of the parameters selected by the cross-entropy method were $\alpha_1\simeq 0.915$ and $\alpha_2\simeq 1.197$, and for the first step, the approximation of the Kullback-Leiber divergence between $g^*$ and $g_\alpha $ was obtained by simulating from a biased density with parameters $(0.5,0.5)$. The whole cross-entropy method lasted approximately 9 minutes. Most of the running time was allocated to the optimization within each step of the cross-entropy, because each evaluation of the objective function and of its gradient was costly. In order to optimize the running time of the cross-entropy method, the size of the sample used for the approximations of the Kullback-Leiber divergence were set by simulating until we would get $n_{CE}=100$ trajectories with a system failure. The number of $n_{CE}=100$ roughly guaranties that the two first digit of of the Kullback-Leiber divergences are identified by their  approximations.   For each of the three steps needed to select the parameters,  samples of respectively 1970, 126, 127 trajectories were used.

 A comparison between Monte-Carlo and the associated importance sampling estimates is presented in Table \ref{res}, where we display the number $N_{sim}$ of simulations used for each method, the estimates $\hat p$ of the probability, the associated empirical variances $\hat\sigma^2/N_{sim}$ and confidence intervals $\widehat{\mbox{IC}}$, and the mean time of a simulation $t_{sim}$ in seconds. For $10^6$ simulations the results show that the Monte-Carlo estimator has not converged yet, whereas the importance sampling estimate is very accurate. To compare the two methods we estimate the efficiency of their estimators when they have converged. The efficiency is defined by the ratio of the precision and the computational time: $$eff=\frac{1}{\sigma^2/N_{sim}}\times \frac{1}{N_{sim} t_{sim}}=\frac{1}{\sigma^2t_{sim}}.$$ 
The efficiency can be interpreted as the contribution of a second of computation to the precision of the estimator. We estimate it by $\widehat{eff}= \frac{1}{\hat\sigma^2t_{sim}}$. The results indicate that our importance sampling strategy is approximately $7\,000$ times more efficient than a Monte-Carlo method. \\

We also verify that the importance sampling estimations are asymptotically normally distributed. The asymptotic normality was not observed for $N=10^3$, but it was observed for larger sample sizes. For instance for $N=10^4$,  the Figure \ref{fig:normallydistr} shows a normalized histogram on 100 estimations  $\hat p_{IS}$ that matches the normal density with mean $p$ and with the standard deviation of the 100 estimations.\begin{figure}[h]
    \centering
    \includegraphics[width=0.9\linewidth]{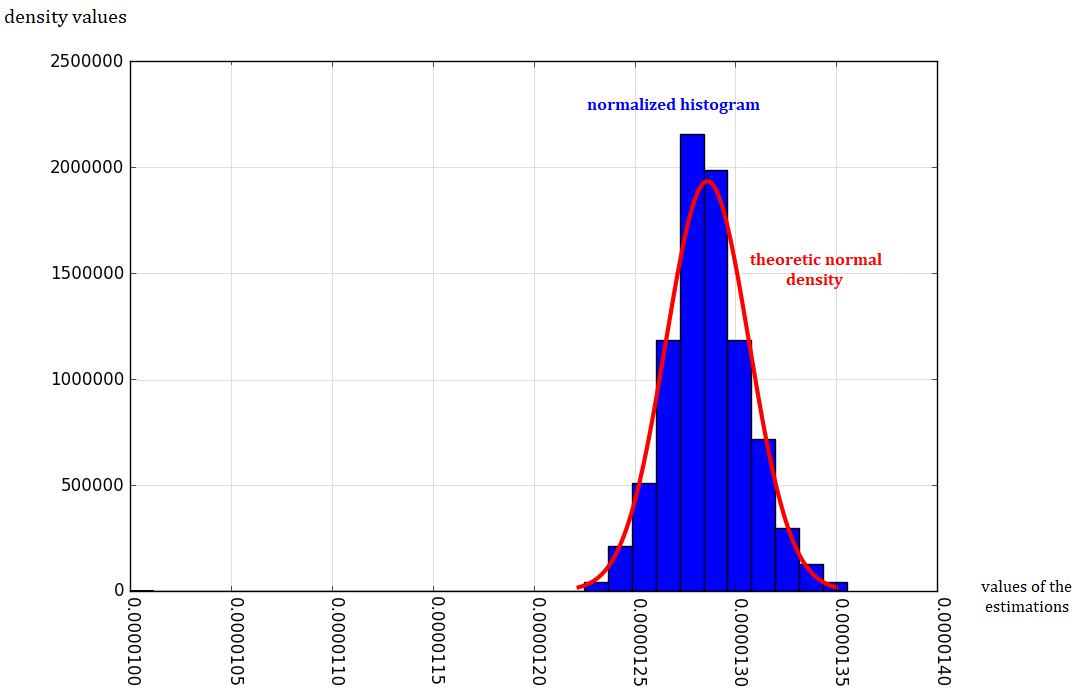}
    \caption{Asymptotic normality of the IS estimator (for $N=10^4$)}
    \label{fig:normallydistr}
\end{figure}
We also recorded the weights of the failing trajectories in the sample of one run of  the IS method with $N=10^4$. The Figure  \ref{fig:weights1}  shows that the weights are close to the value $p$, suggesting that the importance density is close to the optimal density. The figure \ref{fig:weights2} is a zoom-in on the largest weight: It shows there is no degenerated preponderant weight such that $\frac{f(\mathbf Z^\prime_i)}{g_\alpha(\mathbf Z^\prime_i)}\gg p$, suggesting there is no sign of under-favored region of $\mathscr D$ in $g_\alpha$. Here we do not need to check the weight degeneracy in all parts of $\mathcal D$ because, as we now the value of $p$ the can  simply check the estimation are unbiased and normally distributed to ensure convergence is reached. Finally, in Figures \ref{fig:traj1} and \ref{fig:traj2}, we present the graphs of two trajectories obtained respectively with the original process with density $f$ and with the importance process selected by the CE method with density $g_{(\alpha_1,\alpha_2)}$.
\begin{figure}[h]
    \centering
    \includegraphics[width=0.8\linewidth]{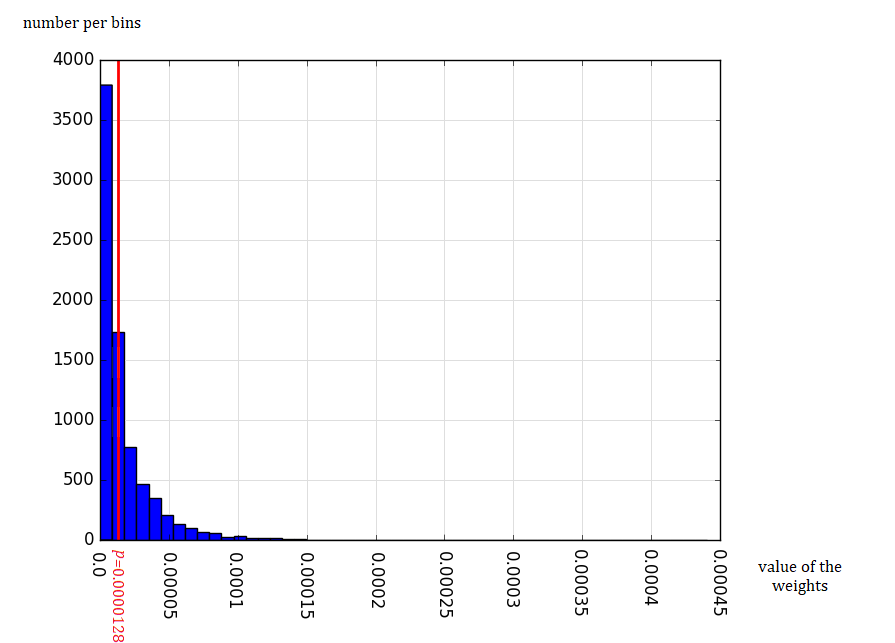}
    \caption{Allocation of the weights of failing trajectories (for $N=10^4$)}
    \label{fig:weights1}
\end{figure}
\begin{figure}[h]
    \includegraphics[width=0.8\linewidth]{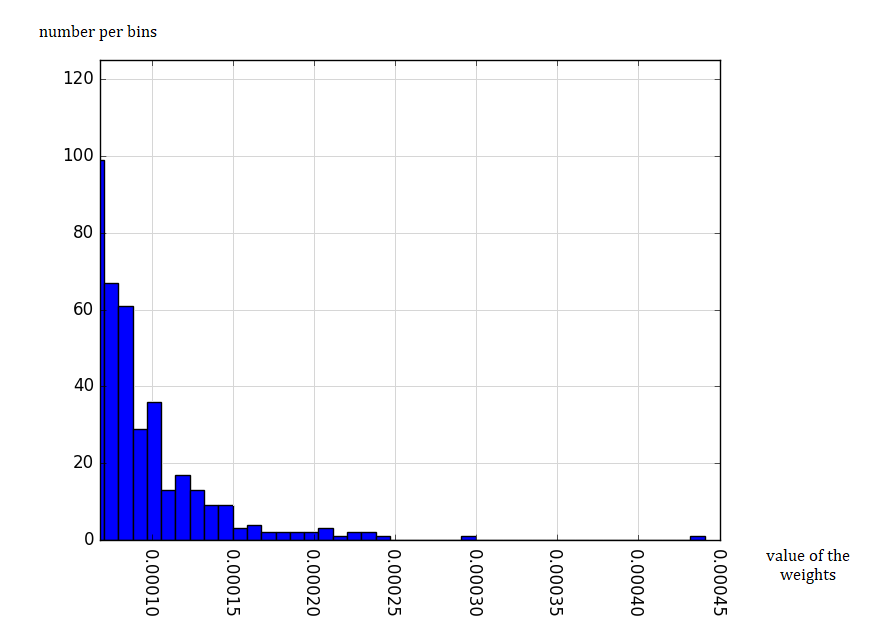}
    \caption{Allocation of the  largest weights in the sample (for $N=10^4$)}
    \label{fig:weights2}
\end{figure}

\begin{figure}
\centering
\begin{minipage}[c]{0.49\textwidth}
\centering
    \includegraphics[width=0.78\linewidth]{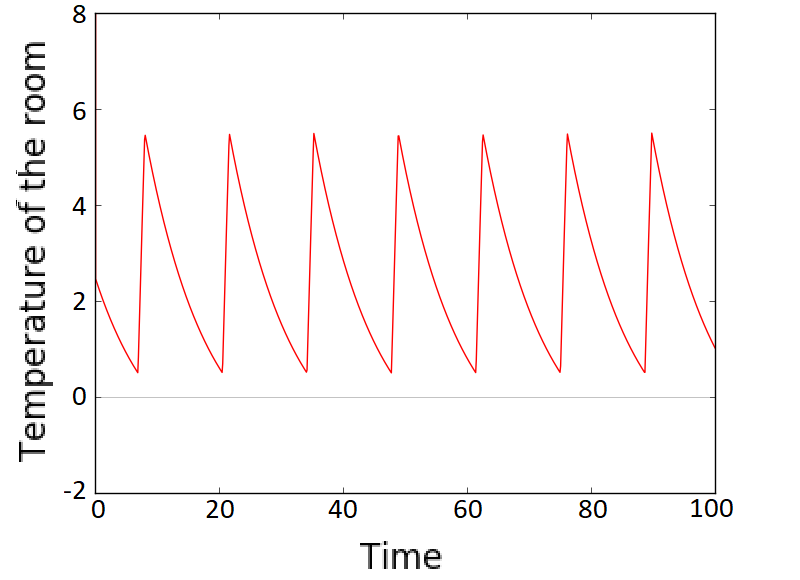}
    \includegraphics[width=0.78\linewidth]{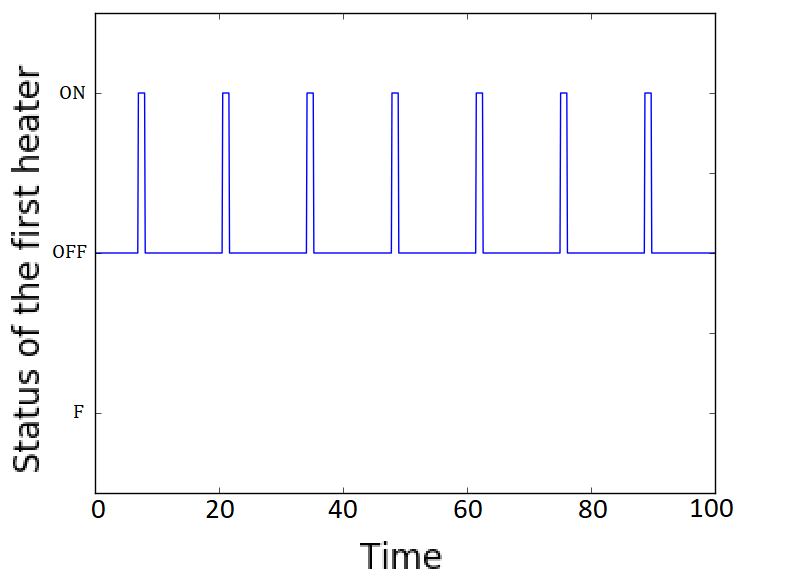}
    \includegraphics[width=0.78\linewidth]{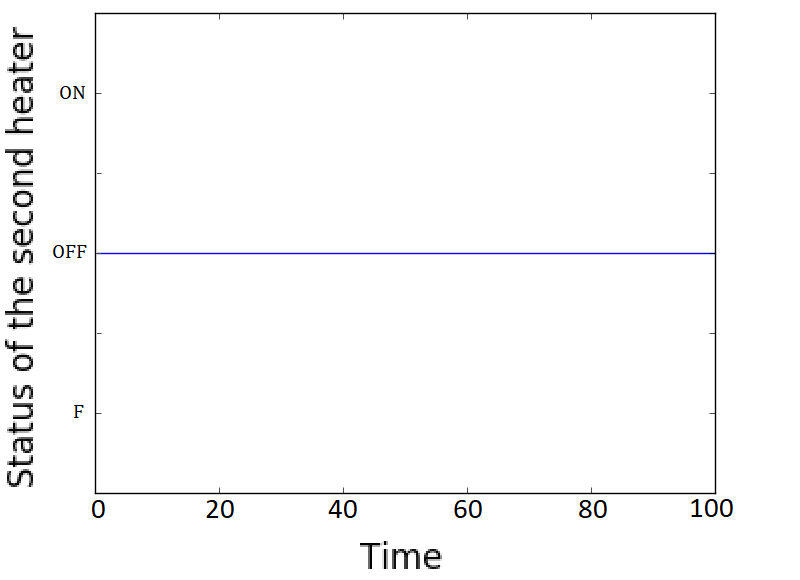}
    \includegraphics[width=0.78\linewidth]{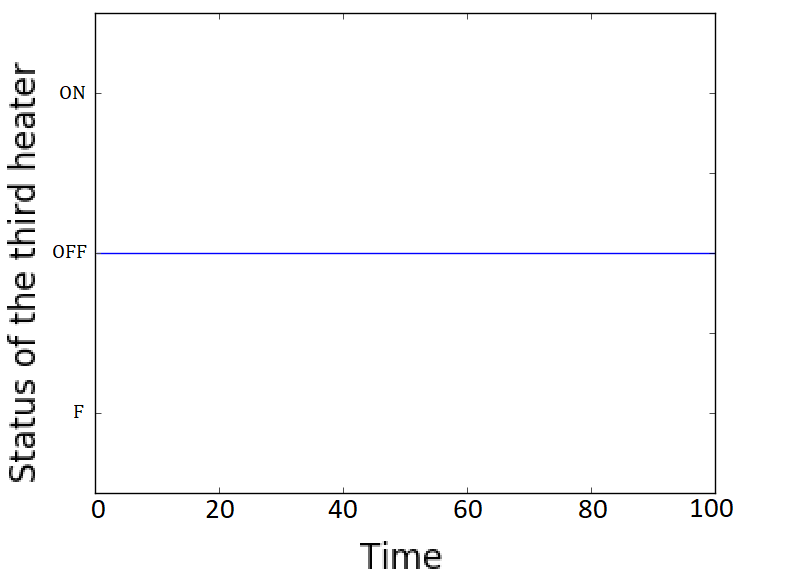}
    \caption{A trajectory of the   coordinates of the state of the system.  This trajectory was generated with the original process with density~$f$. }
    \label{fig:traj1}
\end{minipage}
\begin{minipage}[c]{0.49\textwidth}
\centering
    \includegraphics[width=0.78\linewidth]{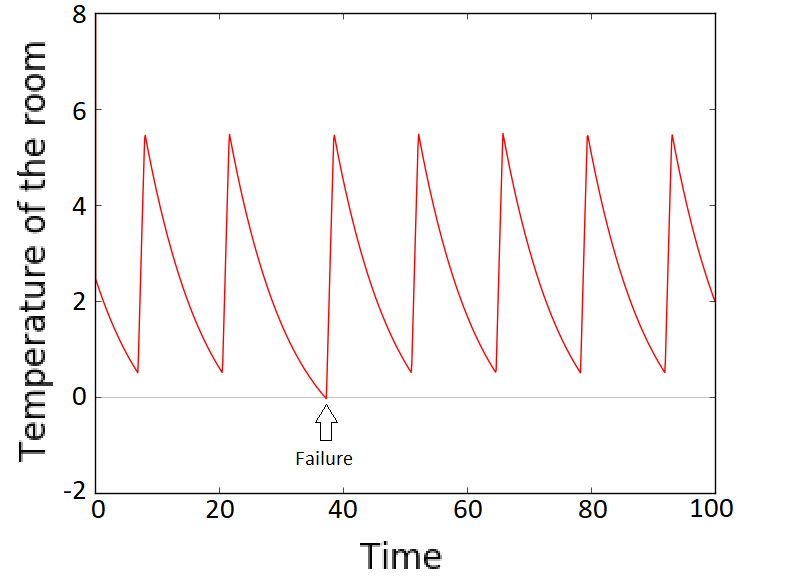}
    \includegraphics[width=0.78\linewidth]{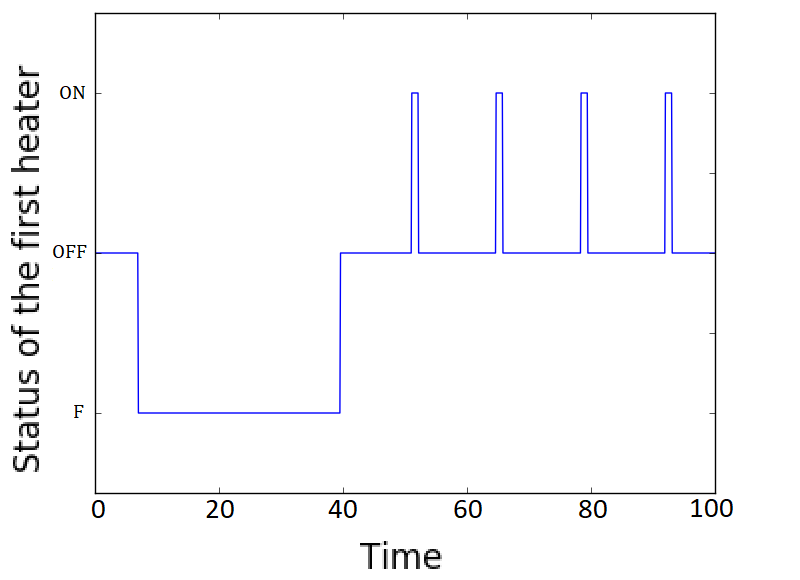}
    \includegraphics[width=0.78\linewidth]{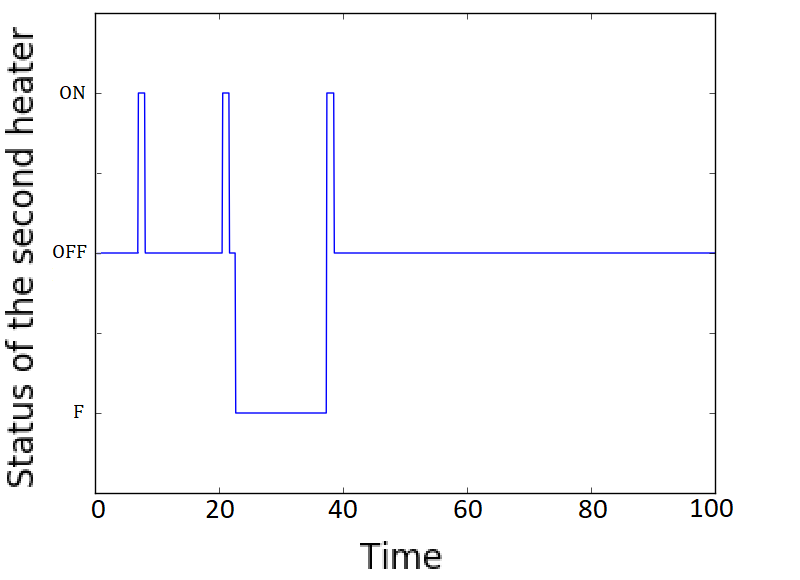}
    \includegraphics[width=0.78\linewidth]{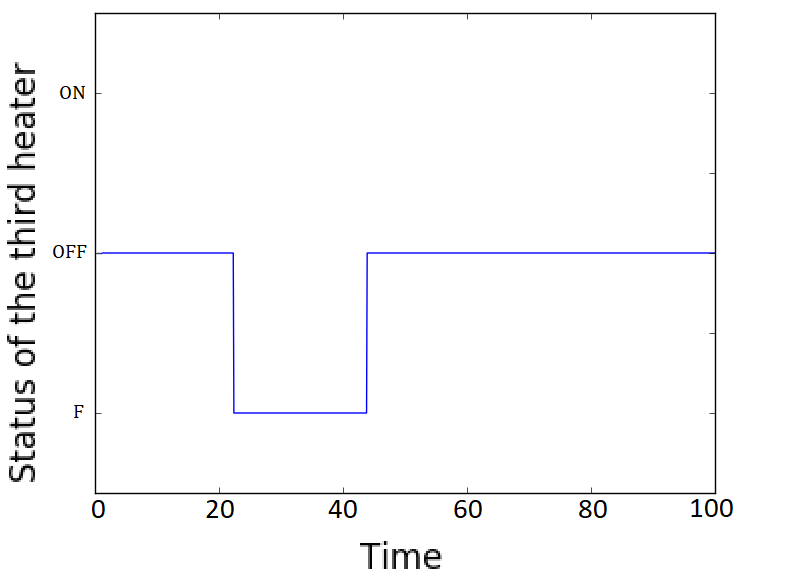}
    \caption{A trajectory of the   coordinates of the state of the system.  This trajectory was generated with the importance process with density $g_{(\alpha_1,\alpha_2)}$. }
    \label{fig:traj2}
\end{minipage}
\end{figure}

 \section{Discussion}
 Our work shows that importance sampling is applicable to any PDMP with or without boundaries. We have given the expressions of the intensities and the kernels of the optimal importance process, and we have seen that it depends on a critical function $U^*$. These expressions show that the optimal importance process has a specific structure. Although we do not have a closed form expression of the function $U^*$, these expressions are important for two reasons: 1)~They prove the existence of an optimal bias, which ensures that the importance sampling technique can be very efficient on PDMPs. 2)~They can guide the practical design of an efficient and explicit importance process. Indeed, by replacing $U^*$ by an approximation in the optimal expressions of the transition rates and kernels, we preserve the structure of the optimal importance process. The presented method therefore helps designing an importance process having the same behavior as the optimal one, and it showed good efficiency on our case study.
 
 This biasing strategy can be applied to any system, but the parametric shape of the approximation of $U^*$ may have to be adapted from case to case. The parametric shape presented in this article is suited to any system with similar components in terms of failure rates and repair rates and containing one minimal cut set (A minimal cut set being a group of components that need to fail so that the system can fail). For a system with a different configuration, we expect the shape of the function $U^*$ will differ, and the method may require a different parametric approximation for the function~$U^*$. 
 
 Our approach through the function $U^*$ can be applied to any sub-classes of PDMP, like, for instance, Markov chains \cite{kuruganti1996importance}, or continuous time Markov Chain, or queing models. In the particular case of PDMP that is a continuous time Markov Chain, the definition of the function $U^*$ is close to the forward committor function used in the transition path theory \cite{metzner2009transition}. In the case of a general PDMP, a committor function would be a function $(z,s)\to \mathbb E\big[\indic{\mathscr D_{A}}(\mathbf Z)| Z_{s}=z \big]$ where $\mathscr D_{A}$ is the set of trajectories that pass through $D$ without passing through a set $A\subset E$ first. $ U^*$ is therefore a commitor function for which $A=\emptyset$. It is also interesting to note that, in the Adaptive Multilevel Splitting algorithm, the asymptotic variance is minimized when using the committor function as the score function \cite{brehier2015analysis,cerou2019asymptotic}, similarly, in the interacting particles system method\cite{del2005genealogical}, the function $U^*$ also plays a role in the optimal potential function    method\cite{chraibi2018optimal}. Approaching the function $U^*$, allows to efficiently estimate rare event for importance sampling, but also for Adaptive Multilevel Splitting algorithm and the interacting particles system method. A method that allows to approximate this function would lead to significant improvement in the reliability assessment field.

 We proposed to find a good approximation of $U^*$ searching inside a family of parametric functions $(U_\alpha)_{\alpha\in A_{param}}$. In our application on the heated room system, we used the Cross-Entropy method to select an efficient parameter $\alpha$.
  We have noticed that the Cross-Entropy method tends to diverge quickly if it is not well initialized: The choices  of  $\alpha_0$ and $n_{CE}$ are critical for the convergence of the method. These two parameters must be well tuned because they impact the quality of the first approximations of the Kullback-Leiber divergence within the CE algorithm, and these approximations must be accurate enough to launch the optimization routine on a good track.  To choose a high value for $n_{CE}$ is a way to insure that these first approximations are accurate enough, but it is not worth considering in practice, as it greatly slows down the CE algorithm. The only solution is to find right away  an $\alpha_0$ which yields  correct approximations of the Kullback-Leiber divergence.  This is unfortunately  difficult to do, and may be even harder with more complex systems. We believe that the CE method used in this article must be improved or substituted by an other parameter optimization method so that the initialization gets less critical.

   With the IS method, depending on the importance process chosen, we can observe some weight degeneracy and therefore slow convergence. Weight degeneracy typically happens when two conditions are met: 1) there is a  domain $\mathcal D_1 \subset\mathcal D$ such that  the likelihood ratios within this domain are very big compared to the likelihood ratios in other domains of $\mathcal D$,  meaning the domain $\mathcal D_1$ is under-favored by the importance density $g$ compared to $f$; and 2) though it is unlikely,   few realizations of the importance process $\mathbf{Z}_i^\prime$ are drawn in $\mathcal D_1$, which creates unbalanced weights. 
Some methods allow to reduce the risk of weight degeneracy by using resampling schemes like for instance in the interacting particle system (IPS) method \cite{del2005genealogical}, but, even though it is reduced, the risk of weight degeneracy still remains within  the IPS method. The IPS method takes in input some potential functions $G_k$ (also called score functions). If these functions does not favor the domain $\mathcal D_1$, the convergence is slowed down\cite{chraibi2018optimal},  and we can end up with the same situation. In this method  the weights of the simulation outputs are the  inverse of the product of  resampling weights multiplied by the objective function's evaluations (see equations 2.18 in \cite{del2005genealogical}).  The  degenerate weights would therefore  appear each time   some trajectories in $\mathcal D_1$ are selected by the resamplings, eventhough the resamplings make it unlikely. Weight degeneracy does not depend on the method used but it rather depend on the choice of the importance process or on the choice the potential functions. Weight degeneracy is that it is  the symptom of a slow convergence, therefore the sample size should be increased until the weight degeneracy fades out: the weight degeneracy is a tool to select  the sample size for both methods. For a fixed sample size, if the sample contains some trajectories in $\mathcal D_1$, the weight degeneracy can be a criterion to reject the importance density, or the potential function, used. But this last criterion is valid only if the sample contains observations in the under-favored domains in $\mathcal D$, which is unlikely by definition. One important point to stress out, is that witnessing no weight degeneracy within the simulations outputs does not guarantee the convergence, we can consider we have converge if the sample size is reasonably large and that  we do not witness weight degeneracy in all part of $\mathcal D$.

 When choosing the importance process, there is a risk of over-biasing. Over-biasing  corresponds to  the situation where a domain    $\mathcal D_2\subset\mathcal D$ is over-favored by the importance process, resulting in an under-favoring of an other domains $\mathcal D_1\subset\mathcal D$. In this situation a weight degeneracy exist in   $\mathcal D_1$ but it is not witnessed  because no trajectory within the sample is drawn in $\mathcal D_1$. This situation happens when one type of failing trajectories is over represented in the importance distribution comparatively to other types of failing trajectories. This phenomenon can result in underestimating the probability of the system failure and in underestimating the variance. To avoid it, we must satisfy two points: 1) We must design a parametric importance density that can increase the likelihoods of each type of failing trajectories separately. 2) We need to initiate the Cross-Entropy method with a sample of trajectories that contains all types of failing trajectories. It is therefore preferable to apply this method only on systems of reasonable complexity, for which it is possible to determine the different types of failing trajectories.  The   parametric functions $(U_\alpha)_{\alpha\in A_{param}}$  should be flexible enough to satisfy the two previous points, but one should  pay attention to keep the dimension of the vector of parameter $\alpha$ reasonably small,  so that we avoid  a  prohibitive computational effort during the optimization routines in the CE.\\

 \section{Conclusion}
 We have presented a model for multi-component systems based on PDMPs. In order to speed up reliability assessment on such systems, we have adapted the importance sampling method to trajectories of PDMP. We have given a dominant measure for PDMP trajectories, allowing to properly define the likelihood ratio needed to apply the importance sampling method on such processes. The possible kinds of importance processes were discussed, and the optimal biasing strategy when simulating jump by jump was exhibited. We developed and tested a biasing strategy for a three-component heated-room system. Our importance sampling method has shown good performance, increasing the efficiency of the estimator by a factor $7\,000$.
 \newpage
 
\appendix
\section{The measure $\zeta$ is $\mathbf \sigma$-finite\\ when $\mathbf{t_f<\infty}$ and the measures $\mathbf{\nu_{z^-}}$ are bounded}
\label{appendix-sec1}
Remember that we defined the $\sigma$-algebra $\mathscr S$ on the set of the possible values of $\big(Z_{S_k},T_k\big)_{k\leq N}$ as the $\sigma$-algebra generated by the sets in 
 $\underset{\ n\in \mathbb N^* }{\bigcup} \mathscr B\Big(\Big\{\big(z_{s_k},t_k\big)_{k\leq n}\in(E\times\mathbb R^{*}_\text{+})^{n},\, \overset{n}{\underset{i=0}{\sum}} t_i=t_f\Big\}\Big)$. The measure $\zeta$ is defined by: : 
 \begin{align}
 B\in \mathscr S,\quad \zeta \big(\Theta^{-1}(B ) \big)= & \underset{ \mbox{\hspace{-4ex} } (z_{_k},t_{_k})_{k\leq n}
 \in B}{\int\qquad d\delta_{t^*_{n}}(t_n)}\ d\nu_{z_{n }^-}(z_n)\ d\mu_{t^*_{z_{n-1}}}(t_{n-1}) \ ...\ d\nu_{z_{ 1}^-}(z_1)\ d\mu_{t^*_{z_{o}} }(t_{0}) 
 \label{zetaApp}
 \end{align}
 \begin{proof}
Let $A_n=\Big\{\big(z_{s_k},t_k\big)_{k\leq n}\in(E\times\mathbb R^{*}_\text{+})^{n},\, \overset{n}{\underset{i=0}{\sum}} t_i=t_f\Big\}$. Then $\Theta^{-1}(A_n)$ is the set of possible trajectories with $n$ jumps, and the sets $A_n$ for $n\in\mathbb N^*$ form a partition of the set of all possible trajectories. Note that
$A_n\subseteq (E\times[0,t_f))^n$, so 
\begin{align*}
 \zeta \big(\Theta^{-1}(A_n) \big)&\leq \zeta (\Theta^{-1}\big( (E\times[0,t_f))^n \big)\\
 &\leq\underset{ \mbox{\hspace{-6ex}} (E\times[0,t_f))^n}{\int\qquad d\delta_{t^*_{n}}(t_n)}\ d\nu_{z_{n }^-}(z_n) \ d\mu_{t^*_{z_{n-1}}}(t_{n-1}) \ ...\ d\nu_{z_{ 1}^-}(z_1)\ d\mu_{t^*_{z_{o}} }(t_{0})
\end{align*}
We suppose that the $\nu_{z^-}$ are bounded, $\exists M>0, \forall z^-\in\xoverline{E\,},\ \nu_{z^-}(E)<M$. Under this assumption, we have:
\begin{align*}
\zeta \big(\Theta^{-1}(A_n) \big)&\leq M \underset{ \mbox{\hspace{-8ex} }(E\times[0,t_f))^{n-1}}{\int\qquad d\mu_{t^*_{z_{n-1}}}(t_{n-1})} \ ...\ d\nu_{z_{ 1}^-}(z_1)\ d\mu_{t^*_{z_{o}} }(t_{0}) \\
&\leq M \underset{ (E\times[0,t_f))^{n-2}}{\int\qquad } \int_E\int_{[0,t_f)} d\mu_{t^*_{z_{n-1}}}(t_{n-1})\ d\nu_{z_{n- 1}^-}(z_{n-1}) \ ...\ d\nu_{z_{ 1}^-}(z_1)\ d\mu_{t^*_{z_{o}} }(t_{0})\\
&\leq M(t_f+1) \underset{ (E\times[0,t_f))^{n-2}}{\int\qquad } \int_E d\nu_{z_{n- 1}^-}(z_{n-1})d\mu_{t^*_{z_{n-2}}}(t_{n-2}) \ ...\ d\nu_{z_{ 1}^-}(z_1)\ d\mu_{t^*_{z_{o}} }(t_{0})\\
&\leq M^2(t_f+1) \underset{ (E\times[0,t_f))^{n-2}}{\int\qquad } d\mu_{t^*_{z_{n-2}}}(t_{n-2})d\nu_{z_{n- 2}^-}(z_{n-2})\ ...\ d\nu_{z_{ 1}^-}(z_1)\ d\mu_{t^*_{z_{o}} }(t_{0}) 
\end{align*}
By recurrence we get that $\zeta \big(\Theta^{-1}(A_n) \big)\leq M^n(t_f+1)^n$, which proves that $\zeta$ is $\sigma$-finite.\end{proof}

\newpage

\section{Optimal intensity's expression, and some properties of $U^*$}
\label{appendix-sec2}
 \subsection{Equality \eqref{closureCondition} }
 \label{appendix-sec21}
 
 Let $z^\text{-}\in \delta E $ and $s\in[0,t_f)$. Remember that equality \eqref{closureCondition} states that 
$$
 U^{\text{-}}\big(\Phi_{z}(t_z^{*}),s+t_z^{*}\big)= \lim_{ 
 t\nearrow t_z^{*} } U^{*}\big(\Phi_{z}(t),s+t\big).$$
 \begin{proof}
 We denote by $T$ the time until the next jump after the trajectory has reached the state $Z_{s+t}=\phi_{z}(t)$. Then we have:
 \begin{align*}
 U^{*}\big(\Phi_{z}(t),s+t\big)&=\mathbb E\big[\indic{\mathscr D}(\mathbf z)\big|Z_{s+t}=\phi_{z}(t)\big]\\
 &=\mathbb E\Big[\mathbb E\big[\indic{\mathscr D}(\mathbf z)\big|Z_{T+s+t}\big]\Big|Z_{s+t}=\phi_{z}(t)\Big]\\
 &=\mathbb E\Big[(\indic{T<t^*_{\Phi_z(t)}}+\indic{T=t^*_{\Phi_z(t)}})U^*(Z_{T+s+t},s+t+T) \Big|Z_{s+t}=\phi_{z}(t)\Big]\\
 &= \int_0^{t^*_{\Phi_z(t)}} U^-( \Phi_{\Phi_z(t)}(u),s+t+u)\lambda_{\Phi_z(t)}( u)\exp\big[-\Lambda_{\Phi_z(t)}(u)\big]du \\
 &\quad + \exp\big[-\Lambda_{\Phi_z(t)}( t^*_{\Phi_z(t)})\big]\int_E K_{z^-}(z^+)U^*(z^+,s+t+t^*_{\Phi_z(t)})d\nu_{z^-}(z^+) \\
 &\mbox{where } z^-=\Phi_{\Phi_z(t)}(t^*_{\Phi_z(t)})\\
U^{*}\big(\Phi_{z}(t),s+t\big)
 &= \int_t^{t^*_{z}} U^-( \Phi_{z}(u),s+ u)\lambda_{z}( u)\exp\big[-\Lambda_{\Phi_z(t)}(u-t)\big] du \\
 &\quad + \exp\big[-\Lambda_{\Phi_z(t)}( t^*_{z}-t)\big]\int_E K_{z^-}(z^+)U^*(z^+,s+ t^*_{z})d\nu_{z^-}(z^+) \\
 &\mbox{where } z^-=\Phi_{z}(t^*_z)
 \end{align*}
so $ U^{*}\big(\Phi_{z}(t),s+t\big) =o(1) +(1+o(1))U^{\text{-}}\big(\Phi_{z}(t_z^{*}),s+t_z^{*}\big) $ as $t\to t^*_z,\ t<t^*_z$. 
 \end{proof}
 
 \subsection{Proof of Theorem \ref{th:optijumprate}} 
 \label{appendix-sec22}
 \begin{proof}
 We have seen in the proof above that
 \begin{align*}
 U^{*}\big(\Phi_{z}(t),s+t\big)
 &=\int_t^{t^*_{z}} U^-( \Phi_{z}(u),s+ u)\lambda_{z}(u) \exp\big[-\Lambda_{\Phi_z(t)}(u-t)\big] du \\&\quad + \exp\big[-\Lambda_{\Phi_z(t)}( t^*_{z}-t)\big] \int_E K_{z^-}(z^+)U^*(z^+,s+ t^*_{z})d\nu_{z^-}(z^+) \end{align*}
 so
 \begin{align*}
U^{*}\big(\Phi_{z}(t),s+t\big)&=\int_t^{t^*_{z}} U^-( \Phi_{z}(u),s+ u)\lambda_{z}(u) \exp\big[-\Lambda_{z}(u )\big] \exp\big[+\Lambda_{z}(t )\big]du \\
 &\quad + \exp\big[-\Lambda_{z}( t^*_{z} )\big] \exp\big[+\Lambda_{z}(t )\big] \int_E K_{z^-}(z^+)U^*(z^+,s+ t^*_{z})d\nu_{z^-}(z^+) \\ 
 &=\frac 1 {\exp\big[-\Lambda_{z}(t)\big]}\int_{[t,t^*_{z}]} U^-( \Phi_{z}(u),s+ u)\Big(\lambda_{z}(u)\Big)^{\indic{t< t^*_z}} \exp\big[-\Lambda_{z}(u )\big] d\mu_z(t) 
 \end{align*}
 This last equality allows to transform \eqref{LB1} into \eqref{LB2}.
 \end{proof}

 \subsection{Equality \eqref{Uderiv}}
 \label{appendix-sec23}
 Let $z\in E$ and $s\in[0,t_f)$. Remember that equality \eqref{Uderiv} states that if the functions \linebreak $u\to U^{\text{-}}\big(\Phi_{z}(u),s+u\big) $ and $u\to \lambda_{ z}(v)$ are continuous almost everywhere on $[0,t^*_z)$, then almost everywhere $$ \frac{\partial U^{*}\big(\Phi_{z}(v),s+v\big) }{\partial v}= U^{*}\big(\Phi_{z}(v),s+v\big) \lambda_{ z}(v) - U^{\text{-}}\big(\Phi_{z}(v),s+v\big) \lambda_{ z}(v)$$
\begin{proof} 
 We denote by $T$ the time until the next jump after the trajectory has reached $Z_s=z$. For $0\leq h<t^*_z$, we define $\tau=\min(h,T)$.
 \begin{align*}
 U^{*}(z,s)&=\mathbb E\big[\indic{\mathscr D}(\mathbf{Z})\big|Z_s=z\big]\\
 &=\mathbb E\Big[\mathbb E\big[\indic{\mathscr D}(\mathbf{Z})\big|Z_{s+\tau}\big]\Big|Z_s=z\Big]\\
 &=\mathbb E\Big[(\indic{\tau=h}+\indic{\tau<h})\mathbb E\big[\indic{\mathscr D}(\mathbf{Z})\big|Z_{s+\tau}\big]\Big|Z_s=z\Big]\\
 &=\mathbb E\Big[ \indic{T=h} \,\mathbb E\big[\indic{\mathscr D}(\mathbf{Z})\big|Z_{s+h}=\Phi_z(h)\big]\Big|Z_s=z\Big]\ +\ 
 \mathbb E\Big[ \indic{T<h}\, \mathbb E\big[\indic{\mathscr D}(\mathbf{Z})\big|Z_{s+T}\big]\Big|Z_s=z\Big]\\
 &= U^{*}(\phi_z(h),s+h) \ \mathbb E \big[ \indic{T=h} \big|Z_s=z\ \big]\ +\ 
 \mathbb E\Big[ \indic{T<h}\, U^{*}(Z_{s+T},s+T)\Big|Z_s=z\Big]\\[10pt]
 &=U^{*}(\phi_z(h),s+h) \,\exp\big[-\Lambda_z(h)\big] +\ \int_0^h \int_E K_{\Phi_z(u)}(z^+) U^{*}(z^+,s+u)d\nu_{\Phi_z(u)}(z^+)\lambda_z(u)\exp\big[-\Lambda_z(u)\big] du \end{align*}
 As $\lambda_z(.)$ is continuous almost everywhere we have that almost everywhere :
 \begin{align*}
 U^{*}(z,s)&=U^{*}(\phi_z(h),s+h) \,(1-\lambda_z(0)h+o( h )) +\ \int_0^h U^{\text{-}}(\Phi_z(u),s+u) \lambda_z(u)\exp\big[-\Lambda_z(u)\big] du
 \end{align*} As $u\to U^{\text{-}}(\phi_z(u),s+u)\lambda_z(u)$ is continuous almost everywhere, and we can do a Taylor approximation of the integral, which gives :
 \begin{align*}
 U^{*}(z,s)-U^{*}(\phi_z(h),s+h) &= -\lambda_z(0)\,.h\,.U^{*}(\phi_z(h),s+h) \, 
 +\ h\,.U^{\text{-}}(z,s) \lambda_z(0) +o(h ) 
 \end{align*} So $u\to U^{*}(\phi_z(u),s+u)$ is right-continuous almost everywhere. Therefore $U^{*}(\phi_z(h),s+h) =U^{*}(z,s) +o(1)$, and we get :
 \begin{align*}
 \frac{U^{*}(z,s)-U^{*}(\phi_z(h),s+h)}{h}&= -\lambda_z(0)\, U^{*}(z,s ) \, 
 +\ U^{\text{-}}(z,s) \lambda_z(0) +o(1 )
 \end{align*} Making $h$ tends to zero we get that $ u\to U^{*}(\phi_z(u),s+u)$ has a right-derivative in zero. Applying the same kind of reasoning in state $\Phi_z(-h)$ instead of $z$, we would find that the left-derivative exists and is equal to the right-derivative. So for almost every state $ z\in E$,
 $$ \bigg(\frac{\partial U^{*}\big(\Phi_{z}(v),s+v\big) }{\partial v}\bigg)_{v=0}= U^{*}\big(\Phi_{z}(0),s+0\big) \lambda_{ z}(0) - U^{\text{-}}\big(\Phi_{z}(0),s+0\big) \lambda_{ z}(0) $$Applying the same reasoning in a state $\Phi_{z_o}(v)$ instead of $z$ and using the additivity of the flow, we get that almost everywhere:

 $$ \forall z_o\in E, v>0,\quad \frac{\partial U^{*}\big(\Phi_{z_o}(v),s+v\big) }{\partial v} = U^{*}\big(\Phi_{z_o}(v),s+v\big) \lambda_{ z_o}(v) - U^{\text{-}}\big(\Phi_{z_o}(v),s+v\big) \lambda_{ z_o}(v) $$ 
\end{proof}

\newpage
{
\footnotesize
\bibliographystyle{apt}
\bibliography{sample}
}
\end{document}